\newtheorem{definition}{Definition}[section]
\newtheorem{theorem}{Theorem}
\newtheorem{exmp}{Example}[section]
\newtheorem{lemma}[theorem]{Lemma}
\begin{document}
%
% paper title
% Titles are generally capitalized except for words such as a, an, and, as,
% at, but, by, for, in, nor, of, on, or, the, to and up, which are usually
% not capitalized unless they are the first or last word of the title.
% Linebreaks \\ can be used within to get better formatting as desired.
% Do not put math or special symbols in the title.
\title{On Random Tree Structures, Their Entropy, and Compression}
%
%
% author names and IEEE memberships
% note positions of commas and nonbreaking spaces ( ~ ) LaTeX will not break
% a structure at a ~ so this keeps an author's name from being broken across
% two lines.
% use \thanks{} to gain access to the first footnote area
% a separate \thanks must be used for each paragraph as LaTeX2e's \thanks
% was not built to handle multiple paragraphs
%

\author{Amirmohammad~Frazaneh,~\IEEEmembership{Student~Member,~IEEE,}
        Mihai-Alin~Badiu,~\IEEEmembership{Member,~IEEE}
        and~Justin~P.~Coon,~\IEEEmembership{Senior~Member,~IEEE}% <-this % stops a space
\thanks{The authors are with the Department of Engineering Science, University of Oxford, Parks Road, Oxford, OX1 3PJ, UK, (email: amirmohammad.farzaneh@eng.ox.ac.uk; mihai.badiu@eng.ox.ac.uk; justin.coon@eng.ox.ac.uk).}% <-this % stops a space
}

\maketitle

\begin{abstract}
Measuring the complexity of tree structures can be beneficial in areas that use tree data structures for storage, communication, and processing purposes. This complexity can then be used to compress tree data structures to their information-theoretic limit. Additionally, the lack of models for random generation of trees is very much felt in mathematical modeling of trees and graphs. In this paper, a number of existing tree generation models such as simply generated trees are discussed, and their information content is analysed by means of information theory and Shannon's entropy. Subsequently, a new model for generating trees based on practical appearances of trees is introduced, and an upper bound for its entropy is calculated. This model is based on selecting a random tree from possible spanning trees of graphs, which is what happens often in practice. Moving on to tree compression, we find approaches to universal tree compression of the discussed models. These approaches first transform a tree into a sequence of symbols, and then apply a dictionary-based compression method. Conditions for the universality of these method are then studied and analysed.
\end{abstract}

\begin{IEEEkeywords}
Entropy, Trees, Simply Generated Trees, Random tree models, Tree coding, Tree compression
\end{IEEEkeywords}

% For peer review papers, you can put extra information on the cover
% page as needed:
% \ifCLASSOPTIONpeerreview
% \begin{center} \bfseries EDICS Category: 3-BBND \end{center}
% \fi
%
% For peerreview papers, this IEEEtran command inserts a page break and
% creates the second title. It will be ignored for other modes.
\IEEEpeerreviewmaketitle

\section{Introduction}
Trees are widely used in different areas of science. One of their biggest application area is the field of network science to model different structures, patterns, and behaviours. Some networks are formed specifically as a tree, such as the design used in the ZigBee specification \cite{ergen2004zigbee}. Additionally, trees are often encountered in practice as subsets of complex networks. An application of this is routing tables in networks, which are essentially a tree structure \cite{draves1999constructing, farzaneh2022treeexplorer}. Other notable fields in which trees are used for modelling data include phylogenetic trees \cite{felsenstein2004inferring}, parse trees in Natural Language Processing \cite{marcinkiewicz1994building}, and Barnes-Hut trees in astrophysics \cite{barnes1986hierarchical}. 

There are numerous studies that focus on the extraction of information content, often called the complexity, of graphical data structures. This is mainly because of the complex nature of graphical data structures, especially as their number of nodes increases. This exponential growth in complexity demands a formal way of quantifying the amount of information content in graphical data structures. This knowledge can later on be used for any application that involves the storage, transmission, or processing of these data structures. When it comes to data transmission, Shannon's entropy \cite{shannon1948mathematical} has been the standard metric for complexity ever since its introduction back in 1948. Some notable studies on quantifying the complexity of graphical data structures in terms of Shannon's entropy include the calculation of the entropy of Erdős–Rényi structures \cite{choi2012compression}, and random geometric graphs \cite{badiu2022structural}. Compared to the numerous studies on this area for graphs, measuring the complexity of tree structures has barely been studied before. Some tree models for which entropy has been studied before include random binary trees \cite{kieffer2009structural}, and specific cases of plane trees \cite{golebiewski2017entropy}. This shows that most of the studied models either lack generality, or lack the ability to simulate trees observed in real networks. Even though trees can be seen as a subset of graphs, their unique characteristics and features can be utilized to extract more accurate bounds and results, which can prove to be more useful. Consequently, one of the main approaches of this paper is to study the information content and complexity of tree data structures using Shannon's entropy.

Despite their vast applications, there are very few models for random generation of trees. There are numerous models for creating random graphs, such as the Erdős–Rényi model \cite{gilbert1959random}, Barabasi-Albert model \cite{albert2002statistical}, Stochastic Block Model \cite{holland1983stochastic}, and Random Geometric Graphs \cite{penrose2003random}. However, this variety can not be seen in random models for trees. The existing models for trees are very limited, and most of them rely solely on a uniform distribution among the possible trees. For example, a random generation of a Prüfer sequence \cite{prufer1918neuer} can result in a random tree. Other models focus only on specific type of tree, such as binary trees \cite{makinen1999generating}. One of the most detailed studies on random trees can be found in \cite{drmota2009random}, where several random tree models are introduced an analysed. The tress created using most of the models introduced in \cite{drmota2009random}, such as Polya trees \cite{mauldin1992polya} and Galton-Watson trees \cite{watson1875probability} can grow indefinitely, which means that they are only useful in limited application scenarios. For instance, recursive trees are useful in the analysis of the spread of epidemics \cite{moon1974distance}, family trees of preserved copies of ancient manuscripts \cite{najock1982number}, and chain letters or pyramid games \cite{gastwirth1984two}, all of which have the possibility of growing indefinitely. One of the most simple, yet powerful, models that has been introduced so far for random tree generation is called Simply Generated Trees (or unconditioned Galton-Watson trees). This model is simple, but still effective to capture the dynamics of trees. It has been shown that Simply Generated Trees can act as a generation model for many different types of random trees that are seen in practice \cite{janson2012simply}. One example of this is modelling branching processes \cite{devroye1998branching}. In this paper we start by focusing on this model, and extract its information theoretic content in terms of Shannon's entropy. This can be beneficial in analysing the situations in which Simply Generated Trees are used to model the generation of trees. Another way of randomly generating trees, which has not been worked on in detail in the past, is the extraction of a random tree from an already existing underlying network. This is a scenario that happens a lot in practice. For instance, as stated earlier, the tree corresponding to the routing table of a node in a network is simply a spanning tree of the original network. Therefore, it will be beneficial to look at this method of developing random trees, and analyse it using information theory. In this paper, we formalize this method of creating random trees, and then find an upper bound to its entropy.

Ultimately, we focus on compressing trees to their theoretical limit, which is the entropy of the source. Tree compression can be used in all applications that involve the use of trees. For instance, applications of tree compression in syntax-directed compression of program file and pixel trees are studies in \cite{katajainen1990tree}. The storage space and bandwidth required for communicating trees grows exponentially large with the size of the tree if traditional methods such as adjacency matrices or adjacency lists are used for coding trees. Tree coding methods such as the Prüfer code, the Neville codes, and the Demo and Micikevičius code have a fixed codeword length for same trees even if they are generated from different distributions \cite{caminiti2007coding}, and therefore are not entropy-optimal. Additionally, survey methods that do study information-theoretic optimality of the introduced coding methods often do so by comparing the average codeword length with the entropy of the uniform source only \cite{katajainen1990tree}. Recently, powerful tree compression algorithms such as tree compression with top trees \cite{bille2015tree} tree structure compression using RePair \cite{lohrey2013xml} have been introduced. However, the performance of these algorithms is not analysed using information thoery, which always leaves the question of whether the structures could be compressed more efficiently. For these reason, we seek compression algorithm for trees that are both easy to implement and are shown to have a near optimal performance with respect to information-theoretic measures. The family of dictionary-based compression methods are among the most preferred lossless compression algorithms for different sources and data types \cite{hosseini2012survey}. They are widely used in applications such as text compression \cite{bell1989modeling} and image compression \cite{rahman2019lossless}. Because of the advantages of dictionary-based compression methods, we want to be able to use them to optimally compress tree structures as well. However, the issue is that a tree structure is not made out of a consecutive series of symbols. For this reason, we first consider a family of transformations on trees, which we call tree traversals. We give examples of possible ways that tree traversals can be done, and then move on to applying dictionary-based compression algorithms to sequences generated using tree traversals. We then discuss under what conditions this method of tree compression guarantees optimality.

The paper starts with the study of the complexity of different tree sources. We start with the uniform source, and then move on to the entropy of Simply Generated Trees. Afterwards, a model for generating random trees using an underlying random graph is introduced, which we call the Spanning Tree Model. An upper bound to the entropy of a special case of the spanning tree model is also calculated. We then move on to the subject of tree compression, by introducing tree traversal and examples of it such as Pit-Climbing and Tunnel-Digging. Subsequently, we combine tree traversals and dictionary-based compression algorithms to Simply Generated Trees and the spanning tree model, and show that this combination can optimally compress trees generated from these sources. The paper ends with a conclusion, and potential future directions of research are mentioned.

\section{Entropy of Tree Structures}

In this section, we focus on quantifying the entropy of tree structures. If we look at trees as a random variable from a pool of possible trees, we need a probability distribution on these trees to be able to calculate Shannon's entropy. This necessitates the existence of a random model for creating the tree structures at hand. Having a random model would entail having a probability distribution on the trees, and model the dynamics of creating trees in real-life scenarios. Unfortunately, the issue is that there does not exist an adequate variety of random models for creating tree structures. Existing models are very limited in terms of scenarios they can simulate. For instance, there is no parameter to set the number of nodes in Galton-Watson trees, and the number of nodes in the tree can be anywhere from one to infinity. We will start this section by introducing some concepts and terminologies that we use throughout the paper. We then continue with a very simple method for creating random trees, which is to choose a tree uniformly among all possible trees with the same number of nodes. Additionally, One of the few existing models for creating random trees, called Simply Generated Trees, is studied. We will then introduce another model for creating a random tree. This model is based on creating the underlying graph first, and then choosing one of its spanning trees as the output of the random generator. We call this model the spanning tree model, and then study its entropy.

\subsection{Terminologies}
We define a random tree source using the following parameters.
\begin{itemize}
    \item $T$: set of all possible trees that can be generated by the source. The size of this set can be finite or infinite.
    \item $p_T(t)$: A probability distribution on the trees in $T$. We may only use the notation $p(t)$ if it is clear which tree source we are talking about. Generally, this distribution can be time-dependent or time-independent.
\end{itemize}

In this paper, the term entropy always refers to Shannon's entropy, and is calculated in base two (bits).

\subsection{Entropy of the uniform source}
\label{entropy_uniform}

In this section, we calculate the entropy of a uniform source for various types of tree structures. By a uniform source we mean that having $T$, the probability of observing any $t\in T$ is simply $1/|T|$. This way, the entropy of the source will simply be $\log_2 |T|$ bits.

We first start with unlabeled unordered rooted trees. The sequence of the number of unlabeled unordered rooted trees with $n$ nodes is listed on the on-line encyclopedia of integer sequences \cite[A000081]{oeis}. Table \ref{sequence} shows the possible trees of this kind that can be built with up to three nodes. It is known that the asymptotic limit of this sequence is $cd^nn^{-3/2}$ \cite{polya2012combinatorial}, where $c$ and $d$ are constants that can be found at \cite[A187770]{oeis} and \cite[A051491]{oeis}, respectively. Consequently, the asymptotic uniform entropy for this model, $H_T$, can be calculated using the following equation. This equation shows that the growth rate of the entropy of a uniformly distributed unlabeled unordered rooted tree source is asymptotically linear.

\begin{align}
\begin{split}
\label{asymp_entropy}
    H_T&\sim n\log_2d-1.5\log_2n+\log_2c\\
    &\approx 1.5635n-1.5\log_2n-1.1846
\end{split}
\end{align}

To study the same for unlabeled ordered rooted trees, we refer to the fact that their number matches the sequence of Catalan numbers \cite[Ch.~8]{koshy2008catalan}. In other words, the $n$th Catalan number equals the number of possible ordered rooted trees that can be built on $n$ nodes. The following equality is well-known about Catalan numbers \cite{sedgewick1996introduction}.
\begin{equation}
\label{Catalan}
\begin{split}
    C_n &= \frac{1}{n+1}{2n \choose n}\\
    &=\frac{4^n}{\sqrt{\pi n^3}}\left(1+O\left(\frac{1}{n}\right)\right)
\end{split}
\end{equation}
Based on (\ref{Catalan}), the uniform entropy of the source can be calculated using the following equation.
\begin{equation}
\label{Catalan2}
\begin{split}
H_T &= \log_2 C_n\\
& = \log_2 \frac{1}{n+1}{2n \choose n}\\
 &=\log_2 \frac{4^n}{\sqrt{\pi n^3}}\left(1+O\left(\frac{1}{n}\right)\right)
\end{split}
\end{equation}

\begin{comment}
If we show the $n$th Catalan number with $C_n$, we have
\[C_n = \frac{1}{n+1}{2n \choose n}.\]
Therefore, the uniform entropy of this model can be calculated using the following equation.
\begin{equation}
\label{Catalan2}
\begin{split}
H_T &= \log_2 C_n\\
& = \log_2 \frac{1}{n+1}{2n \choose n}\\
\end{split}
\end{equation}
\end{comment}
Additionally, the asymptotic behaviour of this entropy can also be analysed using (\ref{Catalan2}), which would provide us with the following result.
\begin{equation}
    H_T\sim 2n
\end{equation}
An interesting thing to note is that \cite{kieffer2009structural} achieves the same result for the uniform entropy, but for rooted full binary trees with $2n-1$ nodes. This suggests the existence of a one-to-one mapping between ordered rooted trees with $n$ nodes, and rooted full binary trees with $2n-1$ nodes. It is already known that a transformation called child-sibling representation maps ordered rooted trees to binary trees \cite{fredman1986pairing}. Even though this transformation does not create full binary trees and is not a bijection, we can create a bijection between ordered rooted trees and rooted full binary trees using a similar idea. We call this transformation the double-node transformation, as every node in the original rooted tree, except for the root, will be mapped into two nodes in the binary tree. This is why the $n$ nodes of the original tree will be mapped into $2n-1$ nodes in the binary tree. This transformation is a bijection as every rooted ordered tree can be mapped into a unique rooted full binary tree and vice-versa. The steps of this transformation are explained below, and Fig. \ref{double-node} shows an example of this transformation.

\begin{description}
    \item[Double-node transformation]   
    \item[Input:] a rooted ordered tree with $n$ nodes
    \item[Output:] a rooted full binary tree with $2n-1$ nodes
    \begin{enumerate}
        \item The root of the input is mapped to the root of the output. As the root does not have any siblings, it is only considered a child node.
        \item The nodes of the input are traversed using BFS. 
        \item Every observed node will be transformed into two nodes, which will both be attached to the same node in the binary tree. We call the left one the child node, and the right one the sibling node.
        \item If the observed node is the first child of its parent, its corresponding child and sibling nodes will be attached to the child node of its parent. If not, its child and sibling nodes will be attached to the sibling node of its closest sibling to its left.
        \item The mapping continues until all the nodes in the input tree have been traversed.
    \end{enumerate}
\end{description}
\begin{figure}
    \centering
    \includegraphics[width = \columnwidth]{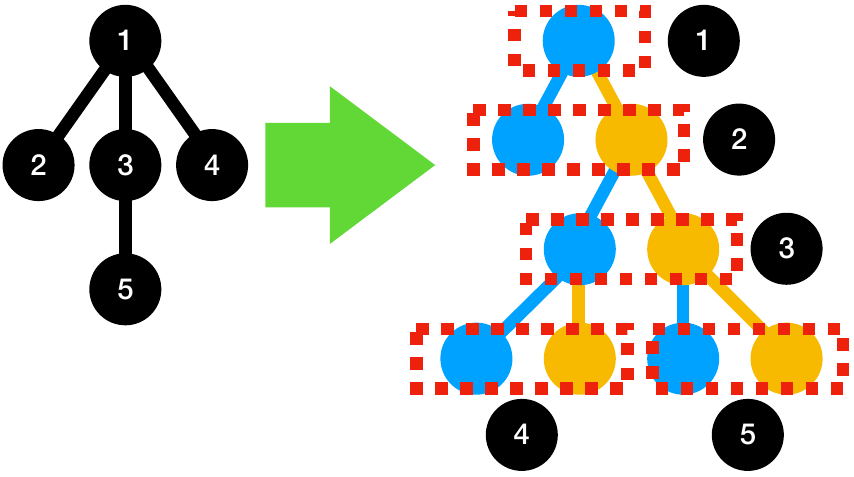}
    \caption{Example of Double-node transformation}
    \label{double-node}
\end{figure}
It can be seen that the double node transformation provides a bijection between rooted ordered trees with $n$ and rooted full binary trees with $2n-1$ nodes. Therefore, the uniform entropy results for these two families of trees are the same.

Finally we study labeled rooted and unrooted trees. The number of labeled unrooted trees is shown to be $n^{n-2}$ \cite[p.~26]{cayley_2009}, which implies that the uniform entropy can be calculated using the following equation. 
\begin{equation}
    H_T = (n-2)\log_2n
\end{equation}
Additionally, for any given labeled unrooted tree with $n$ nodes, we can pick any of its $n$ nodes as the root. Therefore, the number of labeled rooted trees is $n^{n-1}$, and their uniform entropy can be calculated using the following equation.
\begin{equation}
    H_T = (n-1)\log_2n
\end{equation}

\begin{table}
\centering
\caption{Possible unlabeled unordered rooted trees of up to three nodes}
\label{sequence}
\begin{tabular}{ |c| c| c| }
\hline
 $\boldsymbol{n}$ & \textbf{Possible trees} & \textbf{Count}\\ 
 \hline
 1 & \begin{minipage}{.4\columnwidth}
      \includegraphics[width=\linewidth]{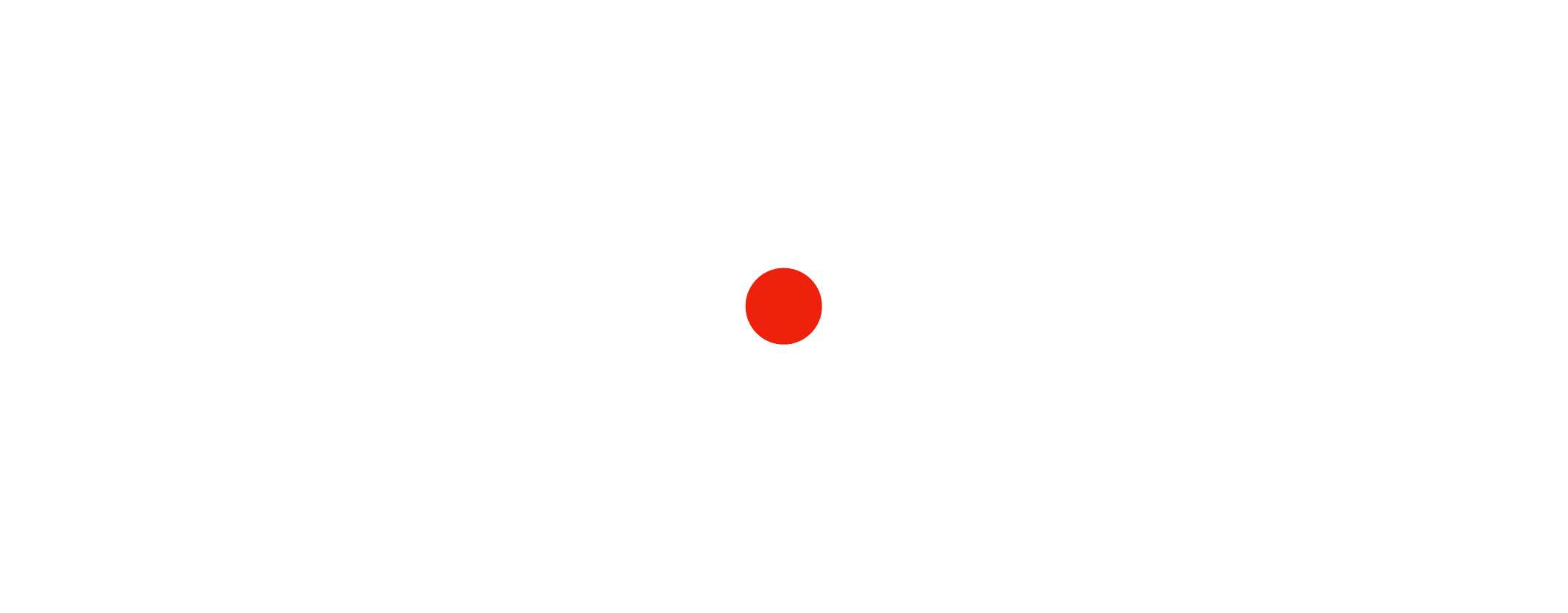}
    \end{minipage}
& 1\\
\hline
2 &  \begin{minipage}{.4\columnwidth}
      \includegraphics[width=\linewidth]{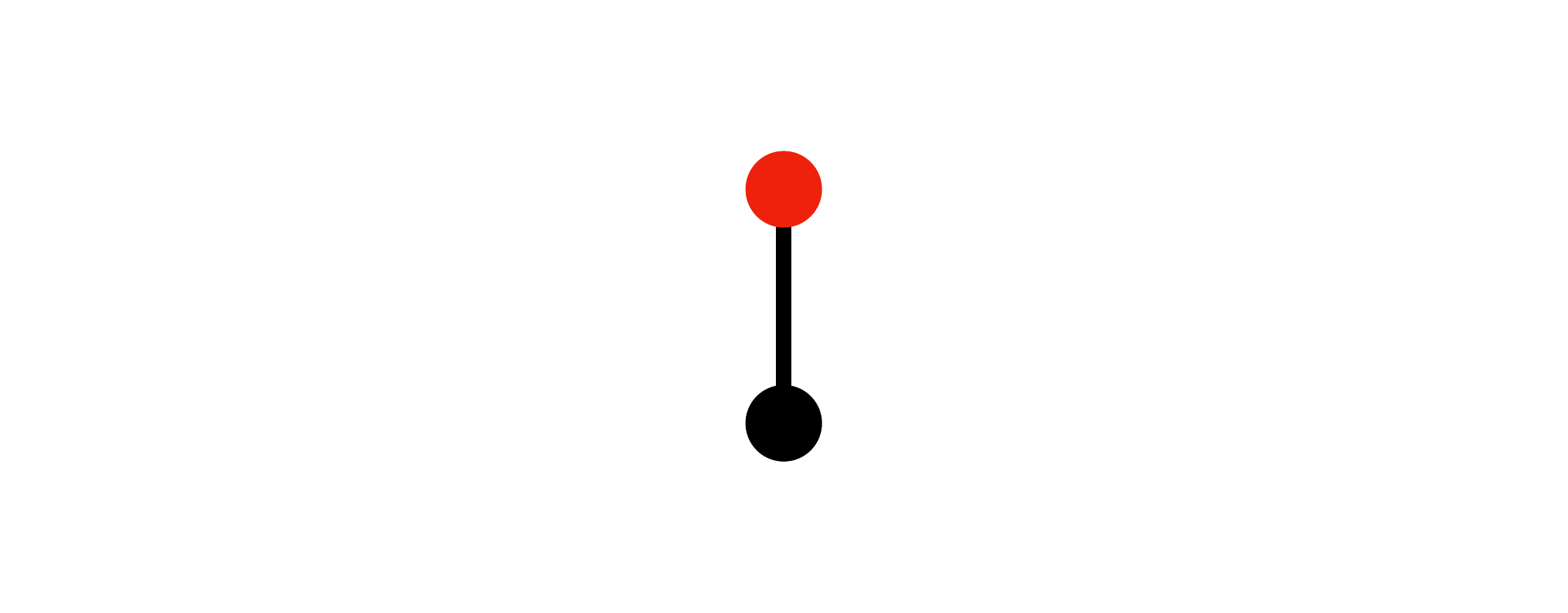}
    \end{minipage} & 1\\
    \hline
3 &  \begin{minipage}{.4\columnwidth}
      \includegraphics[width=\linewidth]{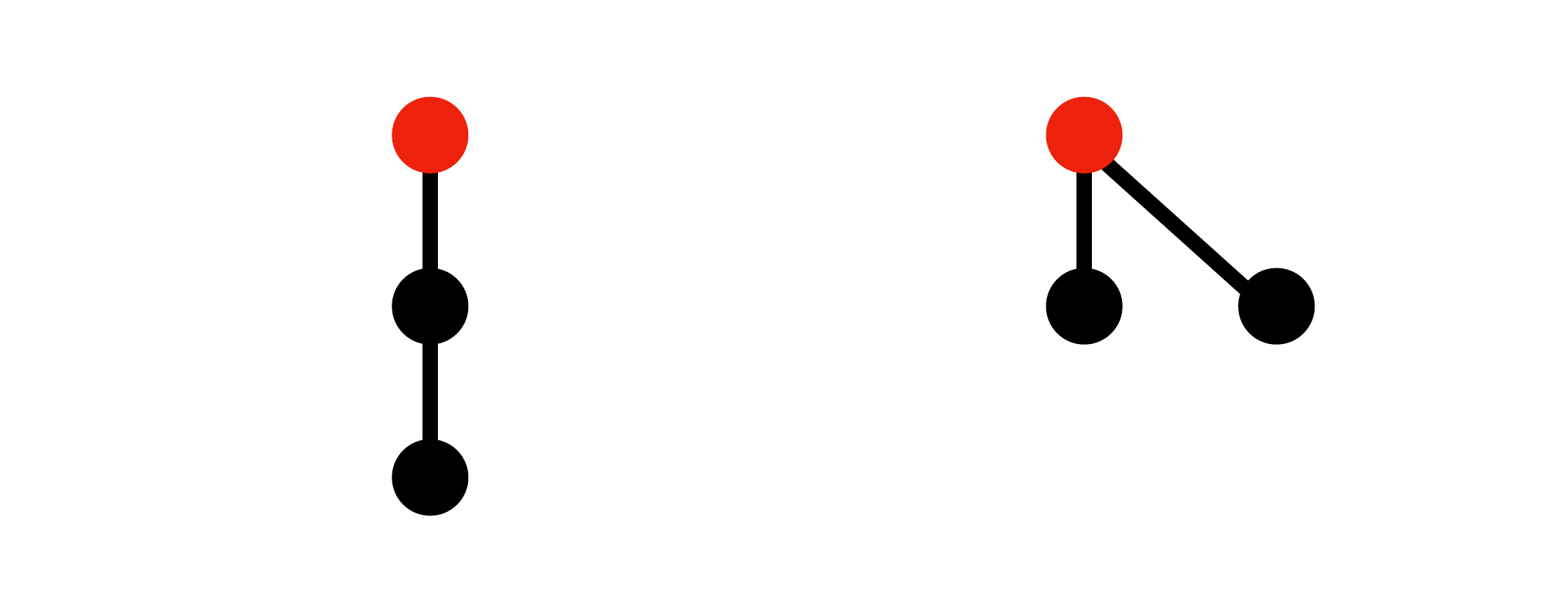}
    \end{minipage} & 2\\
    \hline
4 &  \begin{minipage}{.4\columnwidth}
      \includegraphics[width=\linewidth]{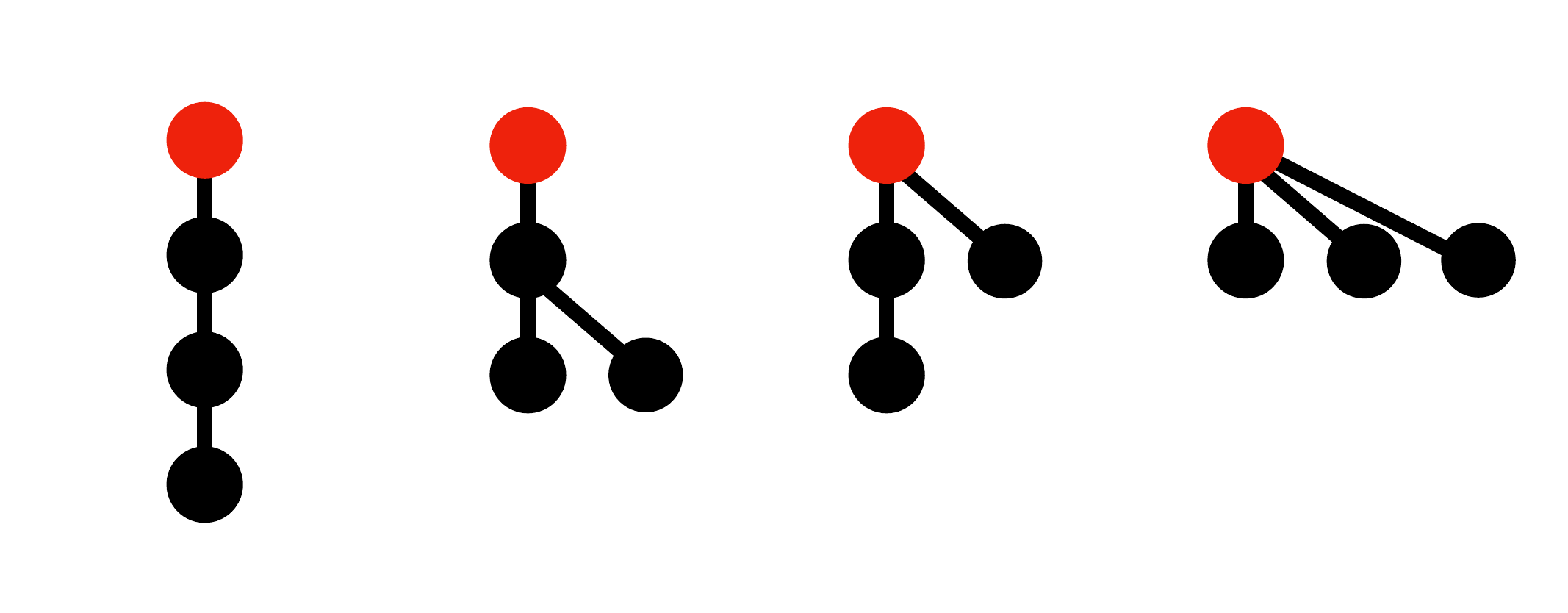}
    \end{minipage} & 2\\
    \hline    
\end{tabular}
\end{table}

\subsection{Entropy of Simply Generated Trees}
\label{Simply Generated Trees}

Simply Generated Trees (SGT) are a popular family of random trees. They were first introduced in \cite{meir1978altitude}, and since then have been used to model random trees. The reason for the popularity of this model lies in its simplicity, and the fact that it is powerful enough to model many scenarios. They are also used to generate more complex tree generation models. In this section, we will calculate the entropy of this family of random tree models.

Simply Generated Trees are generated based on a probability distribution on the number of children of each node. They are rooted and ordered trees. To build an SGT, we need a distribution on the set of whole numbers $\{0,1,2,\ldots\}$. We call this distribution the children distribution of the SGT, and show it with $p_C(c)$. The children distribution is essentially a distribution on the number of children that each node can have, independently from others. The only condition that we impose on the children distribution is for $p_C(0)$ to be nonzero. For example, we can use a geometric distribution, binomial distribution, or generally any discrete distribution on whole numbers that satisfies $p_C(0)\neq 0$. After this distribution is chosen, we are ready to generate the random tree. The following steps show how an SGT is created using its underlying children distribution.

\begin{enumerate}
    \item Create the root of the tree. This will be level 0.
    \item Create the children of the root, based on a number acquired from the children distribution.
    \item For all $i>0$, go through the nodes on level $i$, and choose the number of children for each of them, based on the children distribution.
    \item The algorithm is terminated once the number of children for all the nodes is chosen, and there is no more node to explore (Note that once a node is decided to have zero children, the branch corresponding to that node will be terminated, and the node will become a leaf node).
    
\end{enumerate}

Note that the resulting tree will have ordered branches, as the number of children for each node is chosen sequentially. Generating a tree this way will result in a tree whose probability is equal to the product of the probability of the number of children of all of its nodes. The following example illustrates SGTs that can be made using a specific children distribution.

\begin{exmp}
Assume that we have a children distribution for which $p_C(0)=p$, and $p_C(1)=1-p$. In the trees generated using this distribution, nodes can therefore have no child, or have only one child. Table \ref{SGTexmp} illustrates the possible trees that can be made using this model, alongside their probabilities.
\begin{table}[h!]
\caption{Example of Simply Generated Trees and their probabilities}
\label{SGTexmp}
     \begin{center}
     \begin{tabular}{ | c | c | }
     \hline
      Graph & Probability \\ \hline
     \begin{minipage}{.1\textwidth}
      \includegraphics[width=\linewidth]{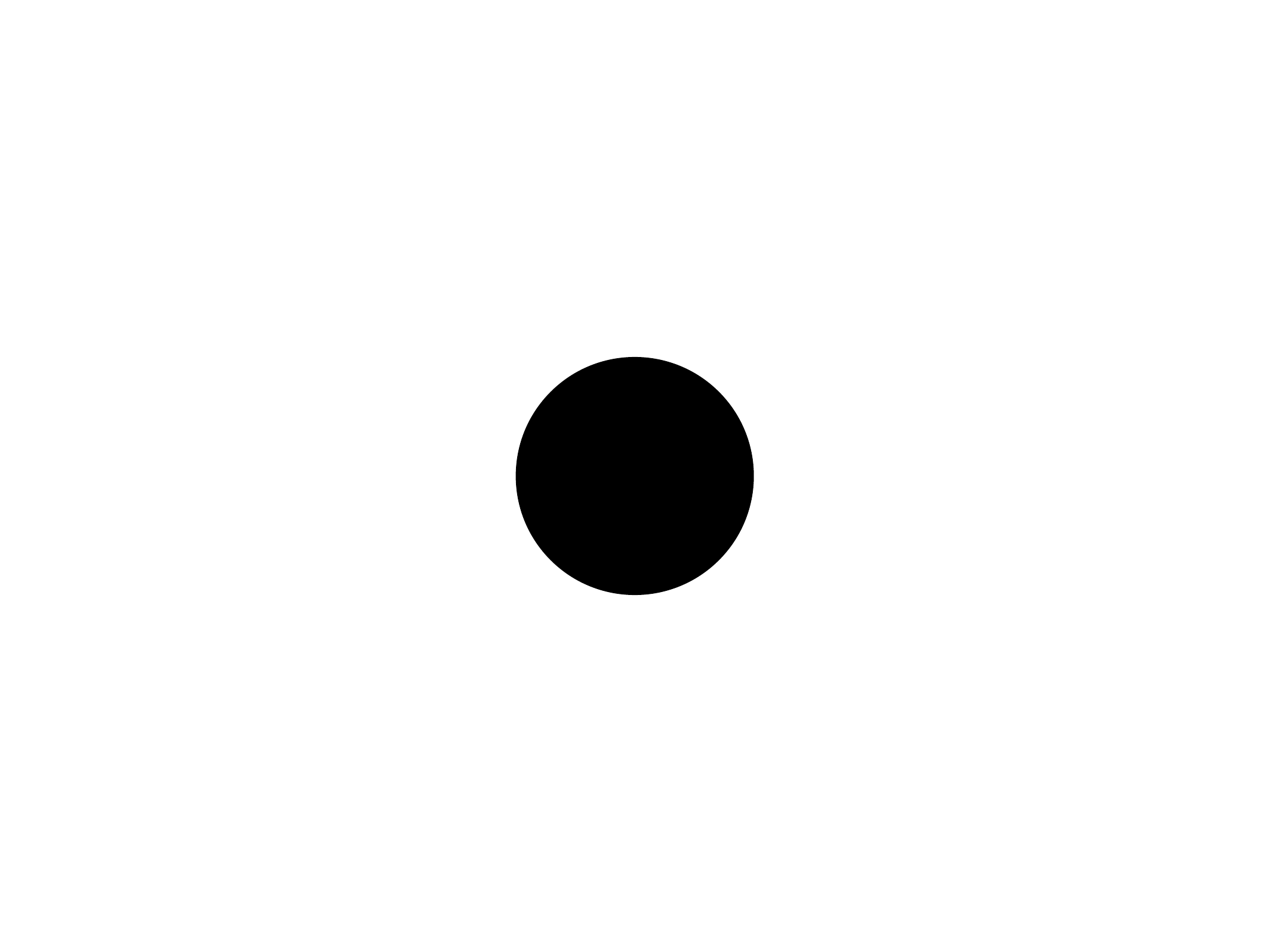}
        \end{minipage}
     & $p$\\
     \hline
     \begin{minipage}{.1\textwidth}
      \includegraphics[width=\linewidth]{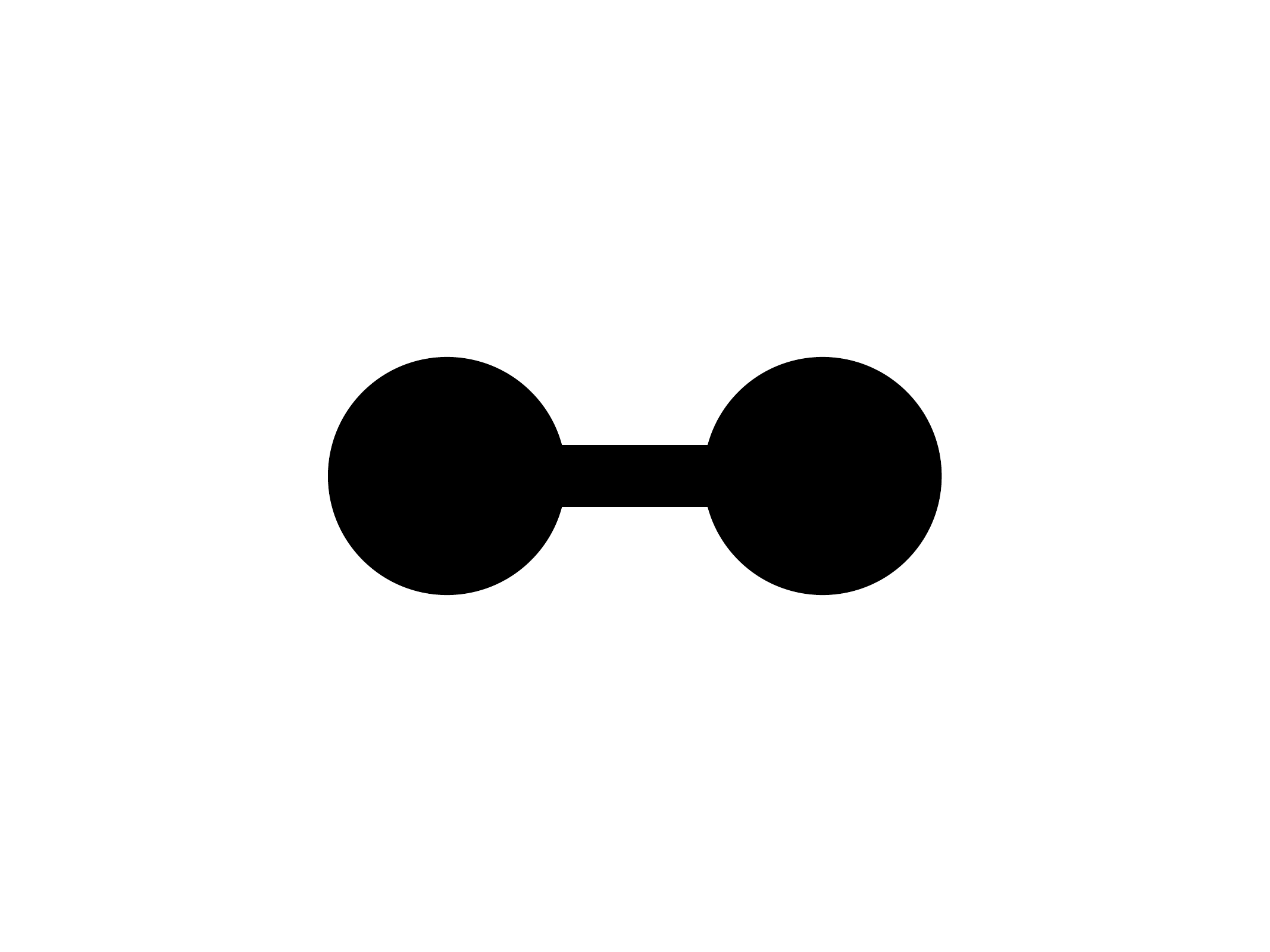}
        \end{minipage}
     & $p(1-p)$\\
     \hline
     \begin{minipage}{.1\textwidth}
      \includegraphics[width=\linewidth]{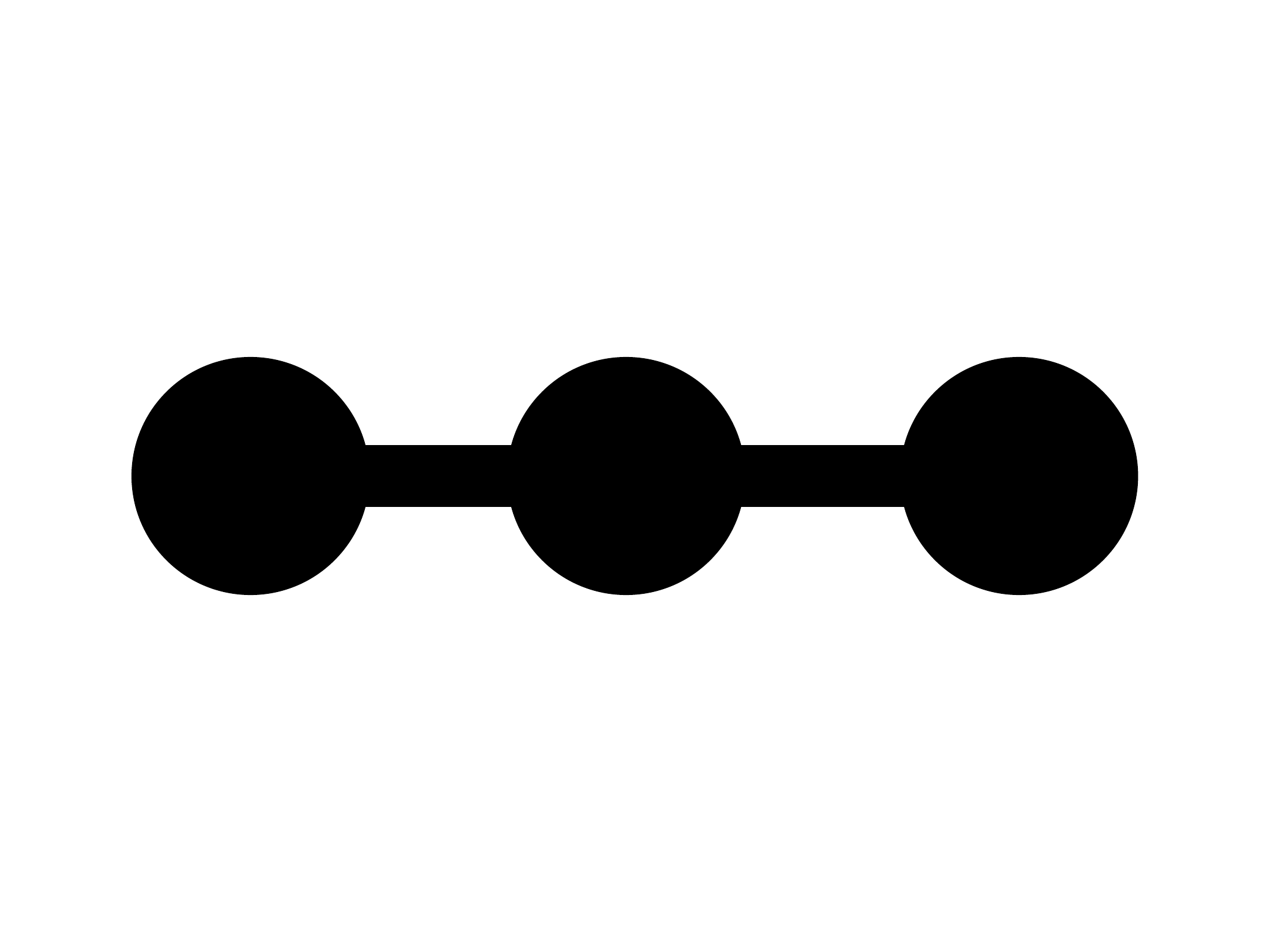}
        \end{minipage}
     & $p(1-p)^2$\\
     \hline
     Line with $n$ nodes & $p(1-p)^{n-1}$\\
     \hline
     \end{tabular}
     \end{center}
\end{table}
\end{exmp}

We now want to use information theory to measure the information content of SGTs. We will first start by studying the possible number of nodes in an SGT. It can easily be seen that in case $p_C(0)\neq 1$, the number of nodes in an SGT can theoretically be unlimited. This is because as soon as the probability of having more than zero children is nonzero, the branches can keep growing infinitely. However, we can still study the average number of nodes in SGTs.

Assume that we have an SGT model, with a children distribution $p_C(c)$, where $p_C(0)\neq 1$. Additionally, assume $T$ to be the set of all possible trees that can be generated by this model. We are interested in calculating the average number of nodes of trees in $T$. Based on the average number of children, Simply Generated Trees can be categorized into subcritical, critical, and supercriticial. This categorization can by calculating the expected number of children $\mathbb{E}[c]$, which we show with $\bar{C}$, based on the children distribution. The model is in the subcritical, critical, and supercritical if and only if $\bar{C}<1$, $\bar{C}=1$, $\bar{C}>1$, respectively \cite{janson2012simply}. It will be shown that the average number of nodes for an SGT model is limited only in the subcritical phase.

Let $n$ be the random variable that shows the number of nodes in a tree, and let $n_t$ show the number of nodes in a particular tree $t$. We can use the following equation to calculate the average number of nodes in the trees created using this model.
\begin{equation}
    \label{avg_node1}
    \mathbb{E}[n] = \sum_{t\in T}n_tp(t)
\end{equation}
For each tree $t$, $n_t$ can be written as the sum of the nodes in different levels, with level 0 being the root node. If $n_{i,t}$ shows the number of nodes in the $i$th level of tree $t$, (\ref{avg_node1}) can be rewritten as
\begin{subequations}
\begin{align}
    \label{avg_node3}
    \mathbb{E}[n] &= \sum_{t\in T}p(t)\sum_i n_{i,t}\\
    \label{avg_node2}
    &=\sum_i\sum_{t\in T}p(t)n_{i,t}.
\end{align}
\end{subequations}
Based on (\ref{avg_node2}), the average number of nodes can be rewritten as the sum of the average number of nodes in each level among all possible trees. Therefore, we need to calculate the average number of nodes for each level individually. We will show the random variable for the number of nodes in level $i$ with $n_i$. Firstly, as all the trees will have only one node in level 0, we have $\mathbb{E}[n_0]=1$. Additionally, if a tree $t$ has $n_{i-1,t}$ nodes in level $i-1$, its expected number of nodes in level $i$ will simply be $n_{i-1,t}\bar{C}$. This is because the number of children of individual nodes is independent. Therefore, if we show the probability of a generated tree having $i$ nodes in level $n$ with $p_n(i)$, we can write the following recursive equation for $i>0$.
\begin{subequations}
\begin{align}
    \label{recursive_nodes1}
    \mathbb{E}[n_i]&= \sum_{j=0}^\infty p_{n-1}(j)j\bar{C}\\
    \label{recursive_nodes2}
    &= \bar{C}\sum_{j=0}^\infty p_{n-1}(j)j\\
    \label{recursive_nodes}
    &= \bar{C}\mathbb{E}[n_{i-1}]
\end{align}
\end{subequations}
Based on (\ref{recursive_nodes}) and the fact that $\mathbb{E}[n_0]=1$, we have
\begin{equation}
    \label{level_num}
    \mathbb{E}[n_i] = \bar{C}^i.
\end{equation}
Combining (\ref{avg_node2}) and (\ref{level_num}), we get the following equation for the average number of nodes in the SGT model.
\begin{equation}
    \label{En}
    \mathbb{E}[n] = \sum_{i=0}^\infty \bar{C}^i
\end{equation}
As (\ref{En}) is a geometric series, we know that it will only converge when we have $|\bar{C}|<1$. This means that the SGT model needs to be in the subcritical phase for the average number of nodes of the model to converge. From now on, we will assume the SGT models that we work with to be in the subcritical phase, so that the corresponding trees have a finite average node number.

Now, we move on to quantifying the entropy of an SGT model based on its children distribution. Assume the corresponding children distribution of the SGT model to have an entropy of $H_C$. In other words, we have
\begin{equation}
    H_C = -\sum_{i=0}^\infty p_C(i)\log_2p_C(i)
\end{equation}
We want to represent the entropy of the SGT model based on $H_C$. If we show the entropy of the SGT model with $H_T$, we can write
\begin{equation}
    \label{ent1}
    H_T = -\sum_{t\in T}p(t)\log_2p(t).
\end{equation}
To simplify (\ref{ent1}), we note that we can write $p(t)$ as the product of the probabilities of the number of children of each node in $t$. We use this to write $\log_2p(t)$ in the form of a sum. We use $n_t$ to show the number of nodes in $t$, and $c_{t,i}$ to show the number of children of node $i$ in $t$.
\begin{equation}
    \label{ent2}
    H_T = -\sum_{t\in T}p(t)\sum_{i=1}^{n_t}\log_2p(c_{t,i})
\end{equation}
To simplify (\ref{ent2}), we open it up based on the number of children. We use $n_{t,i}$ to show the number of nodes in $t$ that have $i$ children. This way we can write
\begin{equation}
    \label{ent3}
    H_T = \sum_{i=0}^{\infty} H_{T,i},
\end{equation}
where
\begin{equation}
    \label{ent4}
    H_{T,i} = -\log_2p(i)\sum_{t\in T}p(t)n_{t,i}.
\end{equation}
To calculate $H_{T,i}$, we note that $\sum_{t\in T}p(t)n_{t,i}$ essentially represents the average number of nodes that have $i$ children in a random tree. To calculate this sum, we perform the same trick as when we calculated the average number of nodes in a tree, we open up the sum on the level of nodes in the tree. It can easily be seen that the average number of nodes that have $i$ children in a specific level of the tree is simply $p(i)$ times the number of nodes on that level. Therefore, we can use (\ref{recursive_nodes1}) and write

\begin{equation}
\label{ent5}
    \sum_{t\in T}p(t)n_{t,i} = p(i)\sum_{i=0}^{\infty}\bar{C}^i = \frac{p(i)}{1-\bar{C}}.
\end{equation}

Inserting (\ref{ent5}) into (\ref{ent4}), we get the following equation.
\begin{equation}
    \label{ent6}
     H_{T,i} = \frac{-p(i)\log_2p(i)}{1-\bar{C}}
\end{equation}
Finally, to calculate $H_T$, we insert (\ref{ent6}) into (\ref{ent3}), and we get the following equation.
\begin{subequations}
\begin{align}
    \label{ent71}
     H_{T} &= -\sum_{i=0}^\infty \frac{p(i)\log_2p(i)}{1-\bar{C}}\\
    \label{ent7}
    & = \frac{H_C}{1-\bar{C}}
\end{align}
\end{subequations}
(\ref{ent7}) gives us the entropy of the SGT model. It can be seen that the entropy can easily be calculated from the entropy of the underlying children distribution. Additionally, note that in order for this entropy to be computable, the sum in (\ref{ent5}) needs to converge, and therefore the SGT needs to be in the subcritical phase. Additionally, note that this entropy can be seen as the number of nodes times $H_C$, which makes sense intuitively. This is because the number of children of each node is independent from the others, and adds an entropy of $H_C$ to the ensemble. This is a very interesting result, as it relates the entropy of the underlying children distribution to the tree model using a simple equation.

It can be seen in (\ref{ent7}) that both the entropy of the children distribution and the average number of children have an effect on the entropy of the trees. An increase in the entropy of the children distribution or the average number of children per node results in an increase in the entropy of the trees. Additionally, as $0\leq \bar{C} \leq 1$, it can be concluded that $H_T\geq H_C$, with equality holding if and only if the the only possible tree that the SGT model can create is a tree with only one node.

\subsection{Conditioned Galton-Watson Trees}

In this section, we will attempt to quantify the entropy of conditioned Galton-Watson trees. Conditioned Galton-Watson trees are simply Galton-Watson trees that are conditioned on their number of nodes. In other words, we condition the trees generated using the Galton-Watson model such that $|t|=n$. This will change the probability distribution of the original Galton-Watson trees. 

Consider the random variables $\{X_1,X_2,\ldots,X_n\}$ to be $n$ i.i.d random variables sampled from the children distribution. For this sequence of random variables to be able to represent a tree with $n$ nodes, we need to have
\[\sum_{i=1}^nX_i=n-1.\]
Therefore, the entropy that we are looking for can be formed using the following conditional entropy.
\begin{equation}
\label{conditioned1}
    H_n = H(X_1,\ldots,X_n|\sum_{i=1}^nX_i=n-1)
\end{equation}
Calculating the entropy using (\ref{conditioned1}) can prove to be challenging. Therefore, we will provide estimates to it using upper bound. We can start by using a zero-order upper bound as below.
\begin{equation}
\label{zero_order}
\begin{split}
    H(X_1,\ldots,X_n|\sum_{i=1}^nX_i=n-1)&\leq H(X_1,\ldots,X_n)\\
    & \leq \sum_{i=1}^nH(X_i)\\
    & = nH_C
\end{split}
\end{equation}
Eq. (\ref{zero_order}) provides us with a simple upper bound. If we want to increase the accuracy of the upper bound, we can move up to higher order models. The equation below describes the first order model.
\begin{equation}
\label{first_order}
\begin{split}
    &H(X_1,\ldots,X_n|\sum_{i=1}^nX_i=n-1)\\
    &\leq H(X_1|\sum_{i=1}^nX_i=n-1)+\sum_{i=1}^nH(X_i|X_1,\sum_{i=1}^nX_i=n-1)\\
    & \leq H_{C,n-1}+(n-1)\mathbb{E}_{C,n-1}\left[H_{C,n-1-i}\right],
\end{split}
\end{equation}
where $H_{C,i}$ and $\mathbb{E}_{C,i}$ show the entropy of the children distribution conditioned on the number of children being limited to $i$ and the expectation over this conditional distribution, respectively. The method used in (\ref{first_order}) can provide us with a tighter upper bound, but has more computational complexity. We can also keep increasing the order in the same manner to get more accurate bounds at the cost of more computational complexity.

\subsection{The Spanning Tree Model}

Unlike graphs, the models for generating random trees are very limited. The variety observed in random graph models can not be seen in trees. Different distributions are fit to real-life networks in order to create mathematical models that can capture the properties of these networks. For instance, the Watts-Strogatz model \cite{watts1998collective} exhibits a high clustering coefficient, which is consistent with many real-life networks. However, random models that are able to exhibit the same properties of real-life tree data structures do not exist. Because of this reason, we were motivated to define a new random tree generation model, which is also based on practice. As a result, we introduce and study the spanning tree model in this section.

In practical applications, the trees we work with are often a spanning tree of a network. An example of this can be seen in network routing. Routing tables are usually used to store the shortest paths from any node in a network to any other one. It can be shown that the routing table for a node in a network is essentially representing a rooted tree, with the root being the origin node. It can also be seen that if the network forms a connected graph, this rooted tree is a spanning tree of the underlying network. Therefore, selecting a random spanning tree of the underlying network can be a practical model for generating random trees. The combination of random graph generation with random spanning tree selectors for creating random trees is a novel approach. One of the key points of this combination is that both fields have been studied extensively and powerful tools exist in both of them. We have already discussed the variety of different models that exist for random graph generation. In addition to that, there are powerful ways of randomly selecting a spanning tree of a single graph \cite{broder1989generating, kelner2009faster, madry2014fast}. The introduced model is very flexible in the sense that any existing random graph generation model can be combined with any existing random spanning tree selector in order to fit the real-life scenario that we want to simulate. In this section, we analyse the entropy of the trees that are created using this method.

Assume that we have a random graph source $G$. The first step is to create a graph $g$, according to the distribution of $G$. $g$ will then have a number of spanning trees (which can also be zero). We then choose one of the spanning trees of $g$ as the generated tree. This can generally be done according to an arbitrary distribution, which can also be dependent on $g$. We use the term spanning tree model to refer to this method of generating a random tree. We use $H_G$ to show the entropy of the random graph generator used in the model, and $H_T$ to show the entropy of the trees generated using the spanning tree model. Fig. \ref{spanning tree} shows the steps of the spanning tree model.

\begin{figure}
    \centering
    \includegraphics[width = \columnwidth]{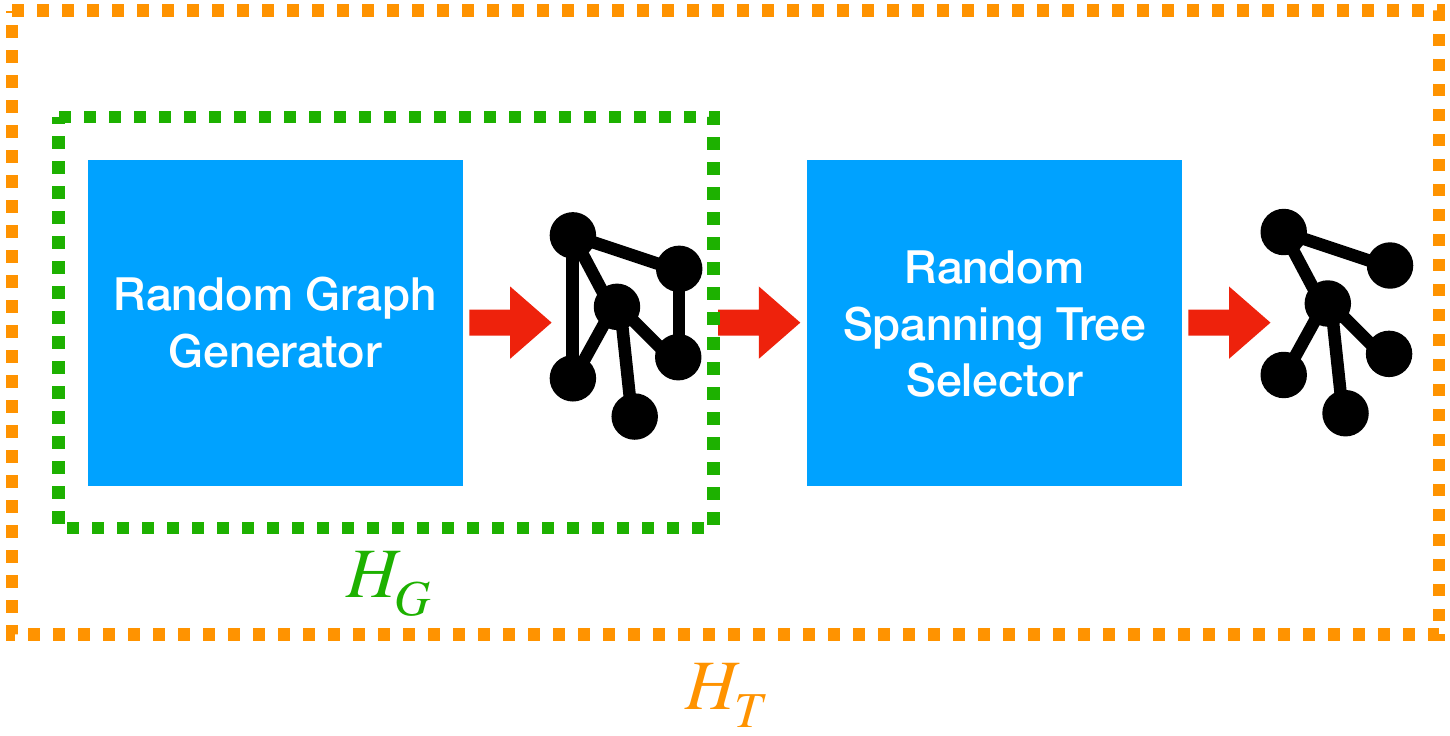}
    \caption{Steps of the Spanning Tree Model}
    \label{spanning tree}
\end{figure}

The goal in this section is to find $H_T$, assuming that $H_G$ is known or can be easily calculated. We can write the following equation to find $H_T$.

\begin{equation}
    \label{Ht}
    H_T = H_G + H(T|G)-H(G|T)
\end{equation}

If we assume that the output trees are chosen uniformly from the spanning trees of the graph, then we can write the following equation for calculating $H(T|G)$.

\begin{subequations}
\begin{align}
    H(T|G) &= \sum_{g\in G}p_G(g)H(T|G = g)\\
    \label{uniform Sg}
     &= \sum_{g\in G}p_G(g)\log_2s_g,
\end{align}
\end{subequations}
where the sum is taken among all the possible graphs that can be generated using the random graph generator, $p_G$ shows the probability distribution of the graph source, and $s_g$ shows the number of spanning trees of graph $g$. Notice that the summation needs to be taken among connected graphs of the source, as they need to have at least one spanning tree for $\log_2s_g$ to be defined.

(\ref{Ht}) and (\ref{uniform Sg}) show that to calculate the entropy of the spanning tree model, we need knowledge about the underlying distribution for the network topology, as well as the number of spanning trees that exists for each graph. Unfortunately, the number of spanning trees of a graph, often called its tree-number, does not have a closed form representation in terms of its number of nodes and edges. There are methods such as Kirchhoff's theorem \cite[Theorem.~13.1]{moore2011nature}, that provide us with a way for calculating the tree-number of graphs, but they require knowledge about the graph's adjacency information. The spanning tree entropy also depends on the model that is used to generate the underlying graph topology, which is not always known. Because of these limitations, we will consider a specific class of random graphs, those created using the famous Erdős–Rényi model \cite{gilbert1959random}.

We now move on to study the entropy of the spanning trees of graphs that are created using the Erdős–Rényi model. In an Erdős–Rényi graph, each edge will be present with a probability of $p$, independent of other edges. As the Erdős–Rényi model does not guarantee connectivity, there will be some graphs that do not even have a spanning tree. Additionally, it is known that there is no closed-form formula to calculate the number of spanning trees of a graph given its number of nodes and edges. Because of these reasons, we will find an upper bound to the entropy of the spanning trees of Erdős–Rényi graphs rather than its actual value.

We consider Erdős–Rényi graphs with $n$ nodes, with parameter $p$. It can easily be shown that the entropy of the graphs created using this model can be calculated using the following equation.
\begin{equation}
\label{ER entropy}
    H_G = {n\choose 2}H(p),
\end{equation}
where
\begin{equation}
    H(p) = -p\log_2p-(1-p)\log_2(1-p).
\end{equation}

Additionally, we assume that once an ER graph is created, one of its spanning trees is chosen uniformly. If the graph does not have any spanning trees, then simply no tree is chosen. The next step is to therefore calculate an upper bound to $H(T|G)$ based on this. Note that as the spanning tree is chosen uniformly after the graph is created, we can write
\begin{equation}
\label{Mihai1}
    H(T|G) = \mathbb{E}\left[\log_2 s(g)\right].
\end{equation}
As logarithm is a concave function, we can apply Jensen's inequality \cite{jensen1906fonctions} to (\ref{Mihai1}) and obtain the following inequality.
\begin{equation}
\label{mihai2}
    H(T|G)\leq \log_2\mathbb{E}\left[s(g)\right].
\end{equation}
Now, we note that based on Cayley's formula, there are $n^{n-2}$ possible trees on $n$ nodes, each of which are a spanning tree of the underlying n-node graph. As the underlying graph is an ER graph, the probability of each of these trees being present in the graph is simply $p^{n-1}$. Therefore, we have
\begin{equation}
\label{mihai3}
    \mathbb{E}\left[s(g)\right] = p^{n-1}n^{n-2}.
\end{equation}
By inserting \ref{mihai3} into \ref{mihai2}, we get the following upper bound on $H(T|G)$.
\begin{equation}
\label{H(T|G)}
\begin{split}
        H(T|G)&\leq \log_2 p^{n-1}n^{n-2}\\
        &=(n-1)\log_2p+(n-2)\log_2n
\end{split}
\end{equation}

We now move on to calculate the term $H(G|T)$ in (\ref{Ht}). Notice that given a spanning tree of an ER graph with $n$ nodes, we will know the status of $n-1$ edges out of the possible $n\choose 2$ edges of the graph. Therefore, given a spanning tree of the graph, the remaining entropy is simply the entropy of the remaining ${n\choose 2}-(n-1)$ edges of an ER graph. Consequently, we can write the following equation.

\begin{equation}
\label{H(G|T)}
    \begin{split}
        H(G|T)&=\sum_{t}p_T(t)H(G|T=t)\\
        &=\sum_{t}p_T(t)\left({n\choose 2}-(n-1)\right)H(p)\\
        &=\left({n\choose 2}-(n-1)\right)H(p)\sum_{t}p_T(t)\\
        &=\left({n\choose 2}-(n-1)\right)H(p),
    \end{split}
\end{equation}
where the sum is taken over all possible spanning trees on the $n$ nodes of the graph, and $p_T$ shows the probability distribution of the trees.

Ultimately, we insert the results from (\ref{ER entropy}), (\ref{H(T|G)}), and (\ref{H(G|T)}) into (\ref{Ht}) to get the total upper bound on the entropy of the spanning trees of the ER model. This will provide us with the following equation after simplification.

\begin{equation}
\label{H(T)}
    H_T\leq (n-1)\left(H(p)+\log_2(np)\right)-\log_2n
\end{equation}

Eq. (\ref{H(T)}) gives us an upper bound on the entropy of trees created using the ER Spanning Tree model. Fig. \ref{ER spanning} illustrates this entropy, and compares it with the entropy of the graph, and the maximum entropy for trees. The maximum entropy is calculated using the fact that a uniform distribution maximises the entropy, and there exist $n^{n-2}$ possible labelled trees on $n$ nodes. The simulation is run for ER graphs with 100 nodes, and the entropy is plotted as a function of the ER parameter $p$. It can be seen that for larger values of $p$, the estimated upper bound for the entropy is larger than the maximum entropy. However, (\ref{H(T)}) is providing us with a tighter upper bound when used for lower values of $p$. The exact boundary for which our upper bound is providing a better bound compared to the maximum possible entropy can be calculated by solving the following equation.
\begin{equation}
    \label{boundary}
    (n-2)\log_2n = (n-1)\left(H(p)+\log_2(np)\right)-\log_2n
\end{equation}
It can easily be checked that the value of $p$ that satisfies (\ref{boundary}) is $p=0.5$. Therefore, our upper bound is working well for values of $p$ less than $0.5$.

%Additionally, fig. \ref{ER2} shows the results from the same upper bound, but for a fixed value of $p$, and sweeping through $n$ instead.

\begin{figure}[t!]
    \centering
    \includegraphics[width = 0.8\columnwidth]{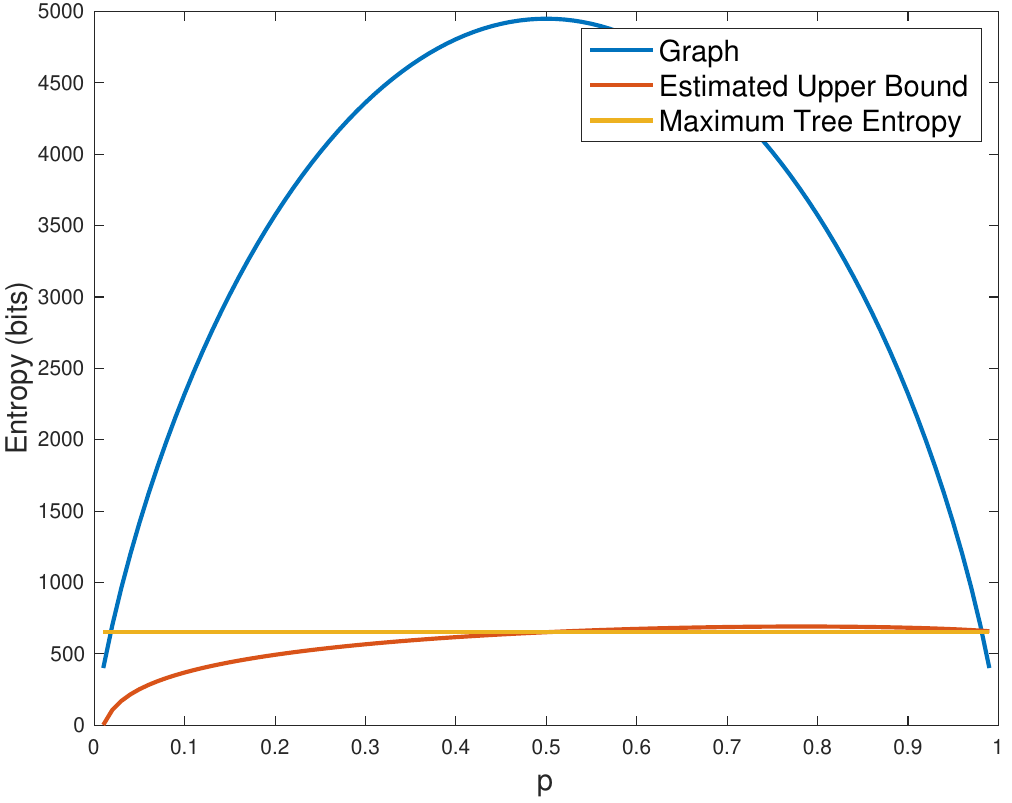}
    \caption{Entropy upper bound for ER spanning trees for graphs with 100 nodes as a function of the ER parameter $p$}
    \label{ER spanning}
\end{figure}

\begin{figure}
    \centering
    \includegraphics[width = 0.8\columnwidth]{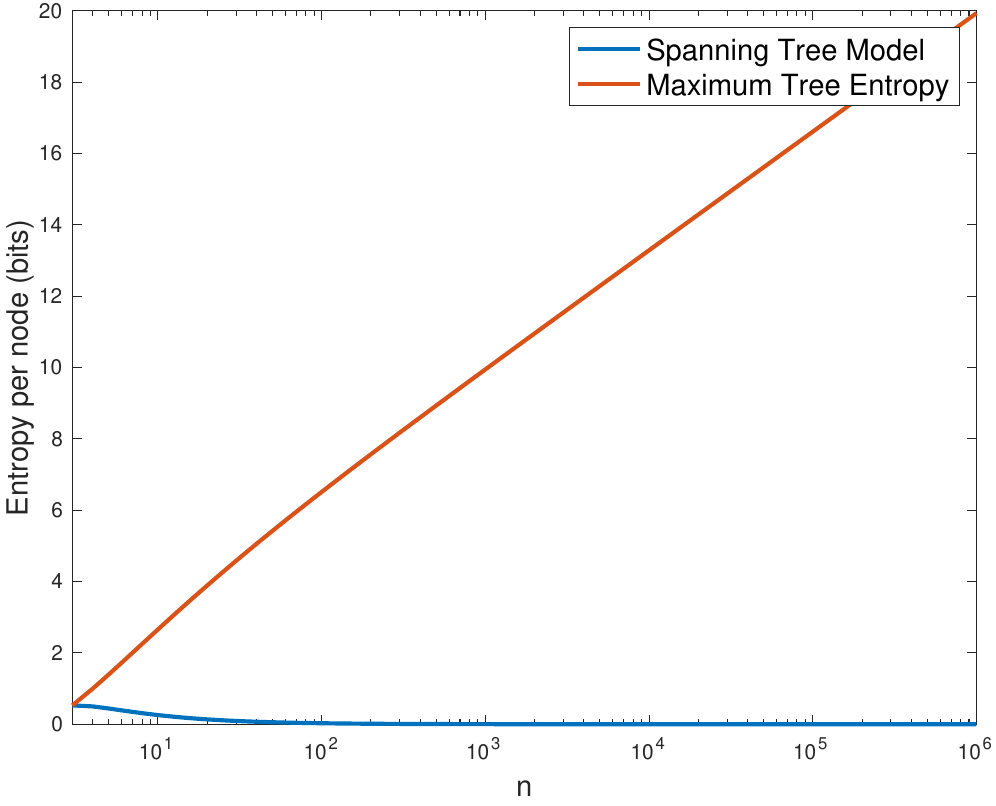}
    \caption{Entropy per node upper bound and maximum entropy per node for ER spanning trees as a function of node number $n$ in the giant component regime threshold}
    \label{connected regime}
\end{figure}

\begin{comment}
\begin{figure}
    \centering
    \includegraphics[width = 0.8\columnwidth]{ER2.pdf}
    \caption{Entropy upper bound for ER spanning trees for graphs with an ER parameter of $0.07$ as a function of the number of nodes in the tree}
    \label{ER2}
\end{figure}
\end{comment}
\renewcommand{\thefigure}{4}

Additionally, we performed a simulation for the regime of giant components for ER graphs. According to \cite[Ch.~11.5]{newman2018networks}, an ER graph has a connected component that grows with $n$ (giant component) when $p>1/(n-1)$. This is a regime which is of particular importance in this study, as it is more likely to have a connected graph in this regime. Therefore, we simulated the entropy upper bounds per node (divided by $n$) by setting $p$ to the giant component regime threshold ($1/(n-1)$), and sweeping over $n$. Fig. \ref{connected regime} illustrates the results. It can be seen that as $n$ grows large, our estimate is providing a much tighter upper bound on the entropy of the generated trees. By inserting $p=1/(n-1)$ into (\ref{H(T)}), it can be seen that the entropy upper bound for the giant component regime threshold is $(n-1)\log_2n-(n-2)\log_2(n-2)$.

\begin{comment}

A slightly tighter upper bound can be obtained using Jensen's inequality on $s(g)$ from the beginning, rather than using Grimett's formula. For finding $H(T|G)$, we can write the following equations.
\begin{equation}
\label{tighter upper bound 1}
    \begin{split}
        H(T|G) &= \mathbb{E}\left[\log_2 s(g)\right]\\
        &\leq \log_2 \mathbb{E}\left[s(g)\right]\\
    \end{split}
\end{equation}
To find $\mathbb{E}\left[s(g)\right]$, note that firstly, according to Cayley's formula, there exist $n^{n-2}$ possible trees on $n$ nodes. Secondly, there are $n-1$ edges in the spanning tree of a graph, which means that the probability of a certain spanning tree existing in an ER graph is simply $p^{n-1}$. Therefore, the expected number of spanning trees in a graph is simply the number 
\end{comment}

\section{Universal Compression of Tree Structures}

In this section, we will introduce universal compression algorithms for the models discussed in the previous section. Before doing so, we need to define what we mean by universality and optimality in the context of these methods.

\textbf{Optimality:} In the previous section, we calculated the entropies of different random tree models. It is shown by Shannon \cite{shannon1948mathematical} that the entropy of a random variable provides a lower bound on the average code length that we can use to compress that random variable. Therefore, we are looking for compression algorithms whose average codeword length is close enough to the entropy of the random tree source at hand.

\textbf{Universality:} As seen in the previous section, each of the introduced models for generating random trees has specific parameters. For instance, Simply Generated Trees have their respective children distribution as their parameter. By looking for universal compression algorithms for specific families of random trees, we are essentially looking for compression algorithms that perform optimally regardless of the model parameter. For example, if we develop a universal compression algorithm for Simply Generated Trees, we want it to be able to optimally compress any SGT, regardless of the respective children distribution.

The idea of all existing universal compression algorithms is that as the parameter of the distribution is unknown, the distribution needs to be somehow learned from the data. For instance, the renowned family of Lempel-Ziv \cite{ziv1977universal} compression algorithms, uses dictionaries to store and learn the most common patterns that can happen for the random variable at hand. However, this demands having a sequence of the random variable, so that the most common patterns can be learned. Whereas, in this paper, we are interested in compressing single large trees, rather than a sequence of them. In this case, optimality translated into the average codeword length for all possible trees to be close enough to the entropy of the source. Based on this, we give the following definitions for the optimality of the compression algorithm for different tree source.

\textbf{Sources with a fixed number of nodes:} If we use $E_{L,n}$ to show the expected codeword length for trees with $n$ nodes and $H_n$ to show the entropy of the tree source for trees with $n$ nodes, our goal is for the compressor to satisfy
\begin{equation}
\label{optimality condition}
    \lim_{n\to \infty} E_{L,n} = H_{\text{inf}},
\end{equation}
to ensure that our compression algorithm is asymptotically optimal on single trees when $n$ and consequently the need for compression grow. 

\textbf{Sources with no bound on their number of nodes:} If we use $L(t)$ to show the codeword length of the compressed tree, our goal is for the compressor to satisify the following condition.
\begin{equation}
\label{optimality condition SGT}
    \mathbb{E}[L(t)] = H(T) + C,
\end{equation}
where the expectation is taken over all possible trees in source $T$, and $C$ is a constant with respect to $|T|$.

In order to achieve the conditions in (\ref{optimality condition}) or (\ref{optimality condition SGT}) based on the model, and still have a universal compression algorithm, our main approach will be to decompose each single tree into a sequence of other random variables, and then apply existing universal compression algorithms to those sequences.

Generally speaking, there are numerous universal compression algorithms. The most famous of these algorithms are those designed by Lempel and Ziv in two papers published in 1977 and 1978, which are known as LZ77 \cite{ziv1977universal} and LZ78 \cite{ziv1978compression}, respectively. These algorithms are proven to be asymptotically optimal for stationary and ergodic stochastic processes \cite[Ch.~13]{cover1999elements}. Therefore, if a stochastic process of trees can be considered a stationary and ergodic process, then Lempel-Ziv algorithms can simply be applied to it to get an optimal compression. Another powerful variation of this family of compression algorithms is the Lempel-Ziv-Welch (LZW) algorithm \cite{welch1984technique}, which also satisfies the same conditions. In the following sections, instead of using the term universal compression algorithm, we simply mention LZ. It must be noted by the reader that LZ can simply be replaced by any other universal compression algorithm that is able to optimally compress stationary and ergodic stochastic processes. 

We start by introducing two tree traversal algorithms and their combination, which are shown to be effective for compressing trees from a uniform distribution. We then introduce a universal compression algorithm for Simply Generated Trees, and then move on to Erdős–Rényi random spanning trees. We conclude this section by having a brief discussion on general universal tree compression algorithms.

\subsection{Compressing uniform tree sources}

In this section, we start by proposing two simple, yet effective, tree coding algorithms called Pit-Climbing (PC) and Tunnel-Digging (TD). We will then move on to introduce TreeExplorer, which is a combination of PC and TD. The methods presented in this section are based on our previous work published in \cite{farzaneh2022treeexplorer}.

\subsubsection{Pit-climbing algorithm}
\label{pit-climbing}
In this section, we introduce a novel tree structure coding algorithm that we call \textit{pit-climbing}. We use this term because of the analogy between the proposed method, and a climber that has been trapped in a pit and wants to climb up.

\textbf{Ternary pit-climbing algorithm (TPC): } We start traversing the tree from the leftmost leaf. We log our tree traversal using three symbols: $\uparrow$, $\Uparrow$ and $\downarrow$. Anytime that we are at a leaf, we take the only possible path, which is upwards. If we take an upward path at any point from an edge, we consider that edge and the subtree below it as deleted (or filled-in) from the original tree so that we do not explore it again. Additionally, we log this upward movement in our code. If we have moved to a node that we have never been to before, we log a $\uparrow$ in the code. Otherwise, if we move upwards to a node that we have seen before, we log it with a $\Uparrow$. When we reach a node that is not a leaf, we look at the leaves of the rooted subtree whose root is the node we are currently at. We then take the path downwards that falls into the leftmost leaf of that subtree. We log this entire fall with a single $\downarrow$. We continue exploring the tree and logging the code in the same manner until we reach the root of the tree and there is no other edge to fall into.

We will clarify TPC with the following example.

\begin{exmp}
    Assume that we are given the rooted tree structure of Fig. \ref{tpc}, where the red node shows the root. The starting point of the algorithm is indicated, and the arrows show the path that PC takes. The orange, green, and blue arrows are used to show $\downarrow$, $\uparrow$, and $\Uparrow$, respectively.
    \begin{figure}[H]
        \centering
        \includegraphics[width=\columnwidth]{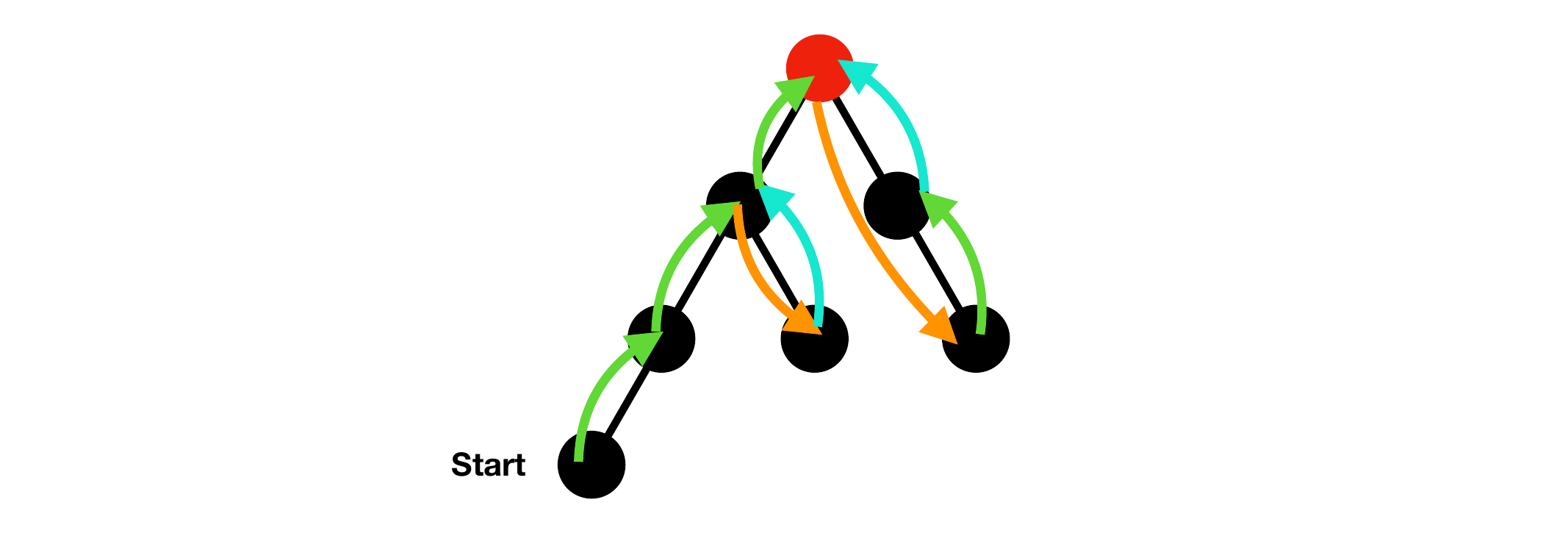}
        \caption{Running TPC on a sample tree}
        \label{tpc}
    \end{figure}
    
\end{exmp}

In source coding, usually a binary code is preferred over a ternary code, as most of our systems for storage and communication are binary-based. To transform TPC codes into binary, we look back at the definition of the symbols. We observe that we can never have consecutive $\downarrow$s. This is because whenever we fall, we fall down to a leaf, so we can never fall twice or more. We make use of this fact, and assign 0 to $\downarrow$, and 00 to $\Uparrow$. We also use 1 to represent $\uparrow$. We call this new binary code for rooted tree structures simply \textit{pit-climbing} (PC). Even though this method of coding does not provide us with an instantaneous code, we claim that PC creates uniquely decodable codes. Theorem \ref{uniquely} proves this statement.

\begin{theorem}
\label{uniquely}
Codes generated by pit-climbing are uniquely decodable.
\end{theorem}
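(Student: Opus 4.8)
The plan is to break the statement into two steps that chain together: (i) every binary pit‑climbing (PC) string can be re‑parsed into a \emph{unique} ternary pit‑climbing (TPC) string over $\{\uparrow,\Uparrow,\downarrow\}$, and (ii) distinct rooted trees produce distinct TPC strings. Composing (ii) after (i) shows that the map from trees to binary codewords is injective, which is exactly unique decodability in this setting. Step (ii) is the ``structural'' half --- it is essentially the recursion that generated the code in the first place --- whereas step (i) is where the fact stated just before the theorem, that $\downarrow$ is never immediately followed by $\downarrow$, has to be used, since the substitution $\uparrow\mapsto 1$, $\Uparrow\mapsto 00$, $\downarrow\mapsto 0$ is \emph{not} prefix‑free ($0$ is a prefix of $00$).

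For step (ii) I would induct on the depth of the tree. A single node has empty code; otherwise, if $T_1,\dots,T_k$ are the subtrees hanging from the children of the root, read left to right, the traversal emits
\[
\mathrm{code}(T)=\mathrm{code}(T_1)\,\uparrow\,(\downarrow\,\mathrm{code}(T_2)\,\Uparrow)(\downarrow\,\mathrm{code}(T_3)\,\Uparrow)\cdots(\downarrow\,\mathrm{code}(T_k)\,\Uparrow),
\]
where each $\mathrm{code}(T_i)$ is either empty or begins with $\uparrow$ and ends with an up‑move, and determines $T_i$ by the inductive hypothesis. It then suffices to show a decoder can recover the positions of the connector symbols ($\uparrow\downarrow$ at the first junction, $\Uparrow\downarrow$ afterwards) and hence the boundaries of the $\mathrm{code}(T_i)$'s. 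I would do this by reading $\downarrow$ as an opening bracket and $\Uparrow$ as a closing bracket --- a genuine TPC word has equally many of each and no prefix has more $\Uparrow$'s than $\downarrow$'s, so the bracketing is well defined --- after which the $\uparrow$‑versus‑$\Uparrow$ distinction (first visit versus revisit of a node) pins down how the recovered subtrees attach to the root.

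For step (i) I would scan the binary string from the left: a $1$ is unambiguously $\uparrow$, so the only issue is to parse each maximal run of $0$'s, which corresponds to a stretch of $\downarrow$'s and $\Uparrow$'s delimited by $\uparrow$'s or the ends of the string. In isolation a run of $0$'s can be tiled in several ways by $\downarrow$ (one $0$) and $\Uparrow$ (two $0$'s), so the crux would be a combinatorial lemma asserting that, within the language of genuine TPC codewords, only one such tiling can actually occur --- equivalently, only one tiling extends to a legal traversal of an actual tree. \textbf{This is the main obstacle.} I would attack it by combining the ``no two consecutive $\downarrow$'s'' property with the bracket bookkeeping from step (ii): any $\downarrow$ opened inside a $0$‑run must be balanced by a later returning $\Uparrow$, and peeling the string as $\mathrm{code}(T_1)$ followed by the (hopefully forced) connector $1\,0$ and then the bracketed remainder should, inductively, leave no freedom in the tiling. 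If this lemma can be pushed through, step (i) follows, and together with step (ii) it yields the theorem.
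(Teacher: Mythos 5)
Your step (ii) is essentially the paper's own proof: an induction on depth using the decomposition of the codeword into the subtree codewords $\mathrm{code}(T_1),\dots,\mathrm{code}(T_k)$ joined by the connectors $\uparrow\downarrow$ (for the first junction) and $\Uparrow\downarrow$ (thereafter), which is exactly the recursion the paper invokes. Where you differ is that you have correctly isolated a second, logically prior issue that the paper's proof passes over in silence: the substitution $\uparrow\mapsto 1$, $\downarrow\mapsto 0$, $\Uparrow\mapsto 00$ is not prefix-free, so before any structural induction can run, one must show that a maximal run of $0$'s admits only one tiling by $\downarrow$'s and $\Uparrow$'s that extends to a legal traversal word. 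Your proposed invariant for this — treat $\downarrow$ as an opening bracket and $\Uparrow$ as a closing bracket, use the fact that every $\Uparrow$ must close a previously opened $\downarrow$ and that no two $\downarrow$'s are adjacent — is the right tool (it already rules out, e.g., the parse $\uparrow\Uparrow\Uparrow\Uparrow$ of $1000000$ in favour of $\uparrow\downarrow\Uparrow\downarrow\Uparrow$), but you explicitly leave the lemma unproven and flag it as the main obstacle.

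So, as written, the proposal has a genuine gap at exactly the point you identify: without the tiling-uniqueness lemma, step (i) is an assertion, and the composition of (i) and (ii) does not go through. To close it you would need to verify not just the balancing condition but the full set of local constraints on TPC words (a codeword begins with an up-move and ends with one; every $\downarrow$ is immediately preceded by an up-move and immediately followed by one; the first up-move out of a freshly visited leaf is $\Uparrow$ precisely when the preceding fall had length one) and show these force the parse of each $0$-run. It is worth saying plainly that the paper's own proof does not close this gap either — it reasons entirely at the level of the ternary symbols and treats the binary translation as transparent — so your framing is the more honest one; it simply stops one lemma short of a complete argument.
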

\begin{proof}
We use induction on the depth of the tree. Firstly, the code for a tree with a single node is uniquely decodable ($\emptyset$). Next, assume that we know PC codes for all trees with a depth of $k$ or less are uniquely decodable. For a tree with a depth of $k+1$, we look at the subtrees of the children of the root. Based on the induction, we know that the PC code for all these subtrees are uniquely decodable. The PC code for the original tree is the concatenation of the PC codes of the subtrees, with a connector of $\uparrow\downarrow \equiv 10$ for the first two subtrees, and $\Uparrow\downarrow \equiv 000$ for all other subtrees. In case the root only has one child, the final code will be the code of the subtree rooted at the child, plus an additional $\uparrow \equiv 1$. Therefore, in all of the cases, the PC code of the tree with a depth of $k+1$ can be uniquely decoded.
\end{proof}

Furthermore, we would like to investigate the length of the codewords generated by PC.

\begin{comment}
\begin{lemma}
\label{num_arrows}
For each $\downarrow$ in the TPC code, there is a unique $\Uparrow$ that comes after it and corresponds to that $\downarrow$. Conversely, for each $\Uparrow$, there is a unique $\downarrow$ that comes before it and corresponds to that $\Uparrow$.
\end{lemma}
\begin{proof}
Any time we fall down from a node, we always come back up to it. This is because the whole tree needs to be traversed, and we always finish at the root. Therefore, for each $\downarrow$ from a node, there is a $\Uparrow$ later on for coming back up to that node. The same discussion can also be used for $\Uparrow$s.
\end{proof}
\end{comment}

\begin{theorem}
\label{PClen}
The PC codeword length for a rooted tree structure with $n$ nodes and $l$ leaves is $n+2l-3$ bits.
\end{theorem}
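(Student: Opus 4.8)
The plan is to count how many times each of the three traversal symbols $\uparrow$, $\Uparrow$, $\downarrow$ occurs when TPC is run on the tree, and then translate into bits via the binary assignment $\uparrow\mapsto 1$, $\Uparrow\mapsto 00$, $\downarrow\mapsto 0$, so that the PC codeword length equals $\#\uparrow + \#\downarrow + 2\,\#\Uparrow$. Everything then reduces to three elementary counting identities.

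First I would show $\#\uparrow + \#\Uparrow = n-1$. An edge is ``filled in'' exactly when it is climbed, so it can never be climbed twice; and since the traversal finishes at the root, every edge must eventually be climbed. Thus the number of upward steps is exactly the number of edges, $n-1$, and each such step is logged as one $\uparrow$ or one $\Uparrow$. Next I would show $\#\downarrow = l-1$. Every fall terminates at the leftmost leaf of the current (partially deleted) subtree, while a climb never lands on a leaf, because for $n\ge 2$ a leaf's parent always has a child and hence is not a leaf. Therefore, apart from the starting leftmost leaf, a leaf is reached only as the endpoint of a fall; since the edge below a leaf is deleted as soon as we leave it, each leaf is reached exactly once, and falls are in bijection with the $l-1$ non‑starting leaves.

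The remaining identity is $\#\Uparrow = \#\downarrow$, which I would get by looking at the sequence of visits to a fixed node $v$. Each time we are at $v$ we either fall into one of its still-present subtrees, or we climb up out of $v$; in the latter case the edge above $v$, together with the subtree containing $v$, is deleted, so that move is the last time $v$ is ever seen. Hence the visits to $v$ form a block ``first arrival, then a fall-and-return, then a fall-and-return, $\dots$, then a final departure upward'', and every return to $v$ (being a revisit) is logged $\Uparrow$ while the first arrival and a possible final upward departure account for the rest. Consequently the number of $\Uparrow$'s landing on $v$ equals the number of falls originating at $v$; summing over all nodes gives $\#\Uparrow=\#\downarrow=l-1$, and then $\#\uparrow=(n-1)-(l-1)=n-l$. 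Plugging in, the PC length is $\#\uparrow+\#\downarrow+2\,\#\Uparrow=(n-l)+(l-1)+2(l-1)=n+2l-3$.

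The main obstacle is making the third step fully rigorous: one must argue carefully, from the deletion rule and the fact that TPC is an ordered depth-first traversal ending at the root, that a fall from $v$ is \emph{always} followed (before termination) by a return to $v$ logged as $\Uparrow$, and that no $\Uparrow$ arises for any other reason. A clean alternative that sidesteps this, and that mirrors the proof of Theorem~\ref{uniquely}, is induction on the depth of the tree: if the root has subtrees $T_1,\dots,T_k$ with $n_i$ nodes and $l_i$ leaves, the PC code of the whole tree is the concatenation of the $T_i$-codes with connector $10$ before $T_2$, connector $000$ before each of $T_3,\dots,T_k$, and a trailing $00$ (and it is just the $T_1$-code followed by $1$ when $k=1$), which adds exactly $3k-2$ bits; then
\begin{equation*}
L(T)=\sum_{i=1}^{k}\bigl(n_i+2l_i-3\bigr)+3k-2=\Bigl(\sum_i n_i\Bigr)+2\Bigl(\sum_i l_i\Bigr)-2=(n-1)+2l-2=n+2l-3,
\end{equation*}
using $n=1+\sum_i n_i$ and $l=\sum_i l_i$, with the single-node base case giving length $0=1+2\cdot 1-3$.
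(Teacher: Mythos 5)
Your main argument is correct and is essentially the paper's own proof: the paper likewise counts $l-1$ falls (one per non-starting leaf), $n-1$ upward moves (one per edge), and uses $\#\Uparrow=\#\downarrow$ via the observation that every fall from a node is eventually matched by a climb back up to that node, giving $(l-1)+(n-1)+(l-1)=n+2l-3$ bits. Your node-by-node justification of the identity $\#\Uparrow=\#\downarrow$, as well as the alternative induction on depth (which parallels the paper's proof of Theorem~\ref{uniquely} and gives $\sum_i(n_i+2l_i-3)+3k-2=n+2l-3$), are both sound and in fact tighten the one step that the paper leaves as a one-sentence assertion.
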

\begin{proof}
Firstly, notice that each $\downarrow$ in the TPC code corresponds to a leaf, as we always fall into a leaf. Additionally, we fall into every leaf except for the one we start the algorithm from exactly once. Therefore, we have $l-1$ $\downarrow$s in the TPC code, which translates into $l-1$ bits in the PC code. Additionally, we climb up each edge of the tree exactly once. Therefore, the number of $\uparrow$s and $\Uparrow$s in the TPC code is equal to the number of edges, which is $n-1$. However, every $\Uparrow$ will translate into 2 bits in the PC code. The number of $\Uparrow$s is equal to the number of $\downarrow$s, as anytime we fall from a node we will have to climb back up to it at some point. Therefore, the number of $\Uparrow$s is also $l-1$, and we will have $l-1$ additional bits when translating the TPC code into a PC code. Consequently, the total number of bits in the PC code will be $l-1+n-1+l-1=n+2l-3$.
\end{proof}

\subsubsection{Tunnel-digging algorithm}

Based on Theorem \ref{PClen}, the PC code length will increase with the number of leaves. However, the number of rooted trees with $l$ leaves does not necessarily increase with $l$. Hence, there is no justified reason for having longer codes for trees with more leaves. As a result, another algorithm called \textit{tunnel-digging} is developed to tackle this problem. The PC algorithm is based on traversing a tree along its edges, up and down. Whereas, in TD, the aim is to traverse the tree in a horizontal manner. The name \textit{tunnel-digging} comes from seeing the traversal method of this algorithm as digging tunnels between nodes on the same depth.

\textbf{Ternary tunnel-digging algorithm (TTD):} We start with the leftmost child of the root, and start moving right to nodes with the same depth in the order of the nodes. For each node that we encounter, we log a $\leftarrow$ if it is a leaf, and a $\rightarrow$ otherwise. If at any point we have to move between two nodes that are not siblings, we use a $\Rightarrow$ to show the transition (digging a tunnel!). Additionally, if at any point there are no more nodes on the right to move to, we move to the leftmost node on the level below, and we mark this transition again with a $\Rightarrow$. We continue until all the leaves of the tree are logged in the code.

The following example illustrates running TTD on a sample rooted tree structure.

\begin{exmp}
Assume that we are given the rooted tree structure of Fig. \ref{ttd}, where the red node shows the root. The starting point of the algorithm is indicated, and the arrows show the path that TTD takes. The blue, green, and orange arrows are used to show $\leftarrow$, $\rightarrow$, and $\Rightarrow$, respectively.

\begin{figure}[H]
        \centering
        \includegraphics[width=\columnwidth]{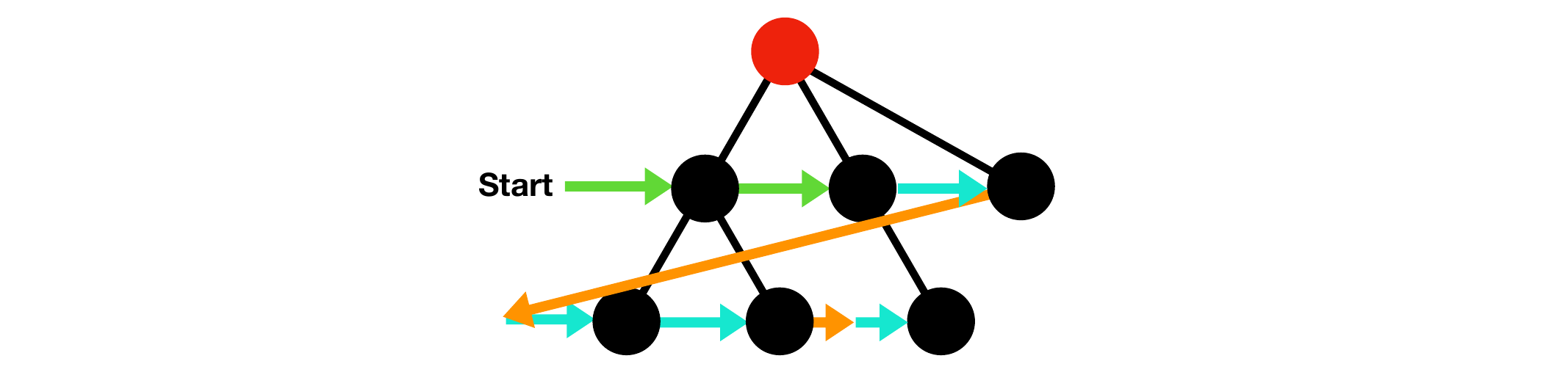}
        \caption{Running TTD on a sample tree}
        \label{ttd}
\end{figure}

\end{exmp}

To transform the TTD code into a binary code which will be called tunnel-digging (TD), we again use the properties of the TTD symbols. We notice that we can never have two consecutive $\Rightarrow$s, as there is always at least one node in between the dug tunnels. Therefore, we use $0$ to represent $\Rightarrow$ and $00$ to represent $\rightarrow$. Additionally, we use $1$ to show $\leftarrow$. Notice that we use a shorter code for leaf nodes, as the number of leaf nodes is expected to be higher than the number of non-leaf nodes when TD is used to code the tree. The proof that the code is uniquely decodable can be done in the same manner as Theorem \ref{uniquely} by replacing $\downarrow$, $\uparrow$, and $\Uparrow$ with $\rightarrow$, $\leftarrow$, and $\Rightarrow$, respectively. The following Theorem calculates the code length of tunnel-digging.

\begin{theorem}
The TD codeword length for a rooted tree structure with $n$ nodes and $l$ leaves is $3n-2l-3$ bits.
\end{theorem}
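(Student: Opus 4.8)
The plan is to mirror the counting argument used for Theorem~\ref{PClen}, but now tracking the symbols of the ternary tunnel-digging (TTD) code rather than TPC. I would first establish how many times each of the three symbols $\leftarrow$, $\rightarrow$, $\Rightarrow$ appears in the TTD traversal of a rooted tree with $n$ nodes and $l$ leaves, and then translate those counts into bit counts using the binary encoding $\leftarrow \equiv 1$, $\rightarrow \equiv 00$, $\Rightarrow \equiv 0$.

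**Counting the symbols.** The key observations are: (i) the traversal visits every node other than the root exactly once, logging a $\leftarrow$ at each leaf and a $\rightarrow$ at each internal (non-root, non-leaf) node; (ii) the root itself is never logged, since the algorithm starts at the leftmost child of the root. Hence the number of $\leftarrow$'s is exactly $l$ (all leaves; note the root cannot be a leaf for $n \geq 2$), and the number of $\rightarrow$'s is the number of non-root internal nodes, which is $(n-1) - l$. For the $\Rightarrow$'s: a $\Rightarrow$ is logged precisely when we move between two non-sibling nodes at the same depth or drop down to the leftmost node of the next level. I would argue that the total number of such transitions equals $(n-1) - 1 = n - 2$ — intuitively, the traversal visits $n-1$ nodes in a linear order, producing $n-2$ "steps" between consecutive visited nodes, and each step is marked by exactly one arrow symbol that is \emph{not} a node-label; wait — that over-counts, since consecutive siblings are stepped over with no $\Rightarrow$. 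Instead I would count $\Rightarrow$'s structurally: one $\Rightarrow$ is needed to enter each maximal run of siblings except the first such run on the tree, i.e.\ the number of $\Rightarrow$'s is (number of internal nodes that have children, counted with the level-drop transitions) — this is the step I expect to require the most care, and I would pin it down by a small induction on depth or by a careful accounting that the $\Rightarrow$ count is forced to be $n - 2l - \text{(something)}$ so that the final bit total comes out right.

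**Assembling the bit count.** Once the symbol counts are fixed, the bit length follows mechanically: each $\leftarrow$ contributes $1$ bit, each $\rightarrow$ contributes $2$ bits, each $\Rightarrow$ contributes $1$ bit. So the codeword length is
\begin{equation}
\label{td_bits}
L = 1\cdot l + 2\cdot\big((n-1)-l\big) + 1\cdot(\#\Rightarrow).
\end{equation}
Matching this to the claimed value $3n - 2l - 3$ forces $\#\Rightarrow = n - 1$, which I would verify directly: every node except the root is reached either as the first child of its parent or via a $\Rightarrow$ from a same-depth predecessor, and additionally every level transition costs a $\Rightarrow$; reconciling these gives $n-1$ transition symbols total. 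Substituting back into \eqref{td_bits} yields $l + 2(n-1-l) + (n-1) = 3n - 2l - 3$.

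**Main obstacle.** The delicate part is the exact bookkeeping of $\Rightarrow$ symbols, since they serve double duty (horizontal jumps between non-sibling cousins \emph{and} vertical drops to the next level), and the informal algorithm description leaves some edge cases implicit (e.g.\ what happens at the leftmost child of the root, or a level consisting of a single node). I would handle this by an induction on the depth of the tree in the same style as Theorem~\ref{uniquely}: the TTD code of a depth-$(k+1)$ tree is obtained from the codes of its depth-$k$ prefix structure by appending the bottom level together with the $\Rightarrow$ connectors, and one checks the counts are preserved under this operation. Everything else is a routine substitution.
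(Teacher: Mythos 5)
Your counts of the node-label symbols agree with the paper's proof: $l$ occurrences of $\leftarrow$ at one bit each and $(n-1)-l$ occurrences of $\rightarrow$ at two bits each, for $2(n-1)-l$ bits. But the step you flagged as delicate — the count of $\Rightarrow$ symbols — is where your argument goes wrong, in two ways at once. First, solving your own equation \eqref{td_bits} for the target $3n-2l-3$ gives $\#\Rightarrow = (3n-2l-3) - l - 2(n-1-l) = n-l-1$, not $n-1$; and your "verification" $l + 2(n-1-l) + (n-1)$ actually evaluates to $3n-l-3$, not $3n-2l-3$. Second, the structural claim that "every node except the root is reached either as the first child of its parent or via a $\Rightarrow$" would not yield $n-1$ transition symbols anyway, since consecutive siblings after the first are reached with no $\Rightarrow$ at all.

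The correct accounting, which is what the paper uses, is that there is exactly one $\Rightarrow$ for each non-root internal node, i.e.\ $n-l-1$ of them. One way to see this: at level $d$ the nodes are partitioned into sibling groups, one group per internal node at level $d-1$, so if $k_d$ denotes the number of such groups there are $k_d - 1$ horizontal $\Rightarrow$s within level $d$, plus one $\Rightarrow$ for each of the $D-1$ level drops (where $D$ is the depth). Summing, $\sum_d (k_d - 1) + (D-1) = (n-l) - D + D - 1 = n-l-1$. With this count the total is $2(n-1)-l + (n-l-1) = 3n-2l-3$ as claimed. Your overall strategy (count each symbol type, then convert to bits) is the right one and matches the paper; the proof fails only because the pivotal $\Rightarrow$ count is wrong and the arithmetic used to confirm it contains a slip.
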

\begin{proof}
The number of $\rightarrow$s and $\leftarrow$s used in the code is exactly equal to the total number of nodes minus the root. However, we use two bits for each $\rightarrow$, which shows non-leaf nodes. Therefore, the $\rightarrow$s and $\leftarrow$s use $2(n-1)-l$ bits in total. Additionally, for every node that has at least one child (except for the root), we will have a $\Rightarrow$ in the code. This would be equal to $n-1-l$ bits. Thus, we will have $3n-2l-3$ bits in total.
\end{proof}

\subsubsection{TreeExplorer}
\label{hybrid}

To see in which scenarios TD performs better than PC we can write
\begin{equation}
\label{inequa}
    3n-2l-3<n+2l-3 \Rightarrow l>n/2.
\end{equation}
Based on Eq. (\ref{inequa}), PC works better when the number of leaves is less than $n/2$, and TD works better otherwise. They exhibit the same performance when the tree has exactly $n/2$ leaves. Based on this, we propose the following coding technique for rooted tree structures.

\textbf{TreeExplorer:} Firstly, the number of leaves of the rooted tree structure is counted ($l$). If $l< n/2$, the structure is coded with PC. The code is then prefixed with a 0 to specify that it has been coded using PC. Otherwise, the structure is coded using TD, and the code is prefixed with a 1.

It can easily be shown that the codes created using TreeExplorer are uniquely decodable. This is because the first bit of the code uniquely determines the coding method, and we have already proven that both PC and TD are uniquely decodable. The following theorem shows the upper bound for the average code length of TreeExplorer.

\begin{theorem}
For any probability distribution on rooted tree structures with $n$ nodes, the average code length of TreeExplorer is less than $2n-2$.
\end{theorem}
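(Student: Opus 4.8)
The plan is to reduce the claim to a bound on the codeword length of a single tree and then average. Fix an arbitrary rooted tree $t$ with $n$ nodes and $l$ leaves, and let $L(t)$ denote the length of its TreeExplorer codeword. TreeExplorer prepends exactly one flag bit to either the PC code or the TD code, so by Theorem~\ref{PClen} and the preceding theorem on the TD code length,
\[
L(t)=\begin{cases} (n+2l-3)+1=n+2l-2, & \text{if } l<n/2,\\ (3n-2l-3)+1=3n-2l-2, & \text{if } l\ge n/2. \end{cases}
\]

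Next I would establish the pointwise estimate $L(t)\le 2n-2$. In the first branch $2l<n$ gives $n+2l-2<2n-2$; in the second branch $2l\ge n$ gives $3n-2l-2\le 2n-2$. Conceptually, TreeExplorer is designed precisely so that $L(t)$ is the smaller of the two affine functions $l\mapsto n+2l-2$ and $l\mapsto 3n-2l-2$ — this is exactly the crossover computation in Eq.~(\ref{inequa}) — and the pointwise minimum of two lines is largest at their intersection $l=n/2$, where both equal $2n-2$. Averaging over any probability distribution $p$ on $n$-node rooted trees then gives $\mathbb{E}[L]=\sum_t p(t)L(t)\le(2n-2)\sum_t p(t)=2n-2$.

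The one genuinely delicate point — and the step I expect to be the main obstacle — is the strictness of the inequality. The pointwise bound is strict in the PC branch but tight in the TD branch exactly when $l=n/2$; a source supported entirely on $n$-node trees with $n/2$ leaves would attain $\mathbb{E}[L]=2n-2$. To recover the strict ``$<$'' one needs an additional ingredient, e.g.\ that when $n$ is odd the integer $l$ cannot equal $n/2$ (forcing $L(t)\le 2n-3$), or that the source of interest also puts mass on trees with $l\ne n/2$ such as a path ($l=1$) or a star ($l=n-1$). In the write-up I would first prove the clean bound $\mathbb{E}[L]\le 2n-2$ as above and then promote it to the strict statement under whichever such hypothesis is intended.
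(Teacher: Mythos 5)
Your proof takes essentially the same route as the paper's: bound the codeword length pointwise in each branch by $2n-2$ (using $n+2l-2<2n-2$ when $l<n/2$ and $3n-2l-2\le 2n-2$ when $l\ge n/2$) and then average over the distribution. The paper's own proof is exactly your first two paragraphs, written with the conditional expectations $q\,\mathbb{E}_{l<n/2}[n+2l-2]+(1-q)\,\mathbb{E}_{l\ge n/2}[3n-2l-2]$.

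Your "delicate point" about strictness is a genuine observation, and it applies to the paper's proof as well: the paper asserts the strict inequality $\mathbb{E}[L]<q(2n-2)+(1-q)(2n-2)$, but the TD branch gives only $\le$, with equality at $l=n/2$. A distribution supported entirely on $n$-node trees with exactly $n/2$ leaves (these exist for even $n$; e.g.\ $n=4$, a path of length two whose middle node has a second leaf child, giving TD length $5$ plus one flag bit $=6=2n-2$) attains $\mathbb{E}[L]=2n-2$. So the theorem is correct only with "$\le$" in place of "$<$", or under an added hypothesis of the kind you describe (odd $n$, or positive mass on trees with $l<n/2$). Proving the clean bound $\mathbb{E}[L]\le 2n-2$ and flagging the equality case, as you propose, is the right fix.
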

\begin{proof}
Assume that the probability of the number of leaves ($l$) being less than $n/2$ is $q$. If $L$ is the length of the code produced using TreeExplorer, we can write
\begin{align*}
    \mathbb{E}[L] & = q\mathop{\mathbb{E}}_{l<n/2}[n+2l-2]+(1-q)\mathop{\mathbb{E}}_{l\geq n/2}[3n-2l-2]\\
    &< q(2n-2)+(1-q)(2n-2) \\
    & = 2n-2.
\end{align*}
\end{proof}

\subsubsection{Comparison with entropy}

We compare the average code length of TreeExplorer with the entropy of the uniform source, which was calculated in section \ref{entropy_uniform}. The result is plotted in Fig. \ref{entropy_uni}. It can be seen that the performance of TreeExplorer is very close to the entropy of the source, which is the optimal compression limit.

\begin{figure}
    \centering
     \begin{subfigure}[b]{\columnwidth}
         \centering
         \includegraphics[width=0.7\columnwidth]{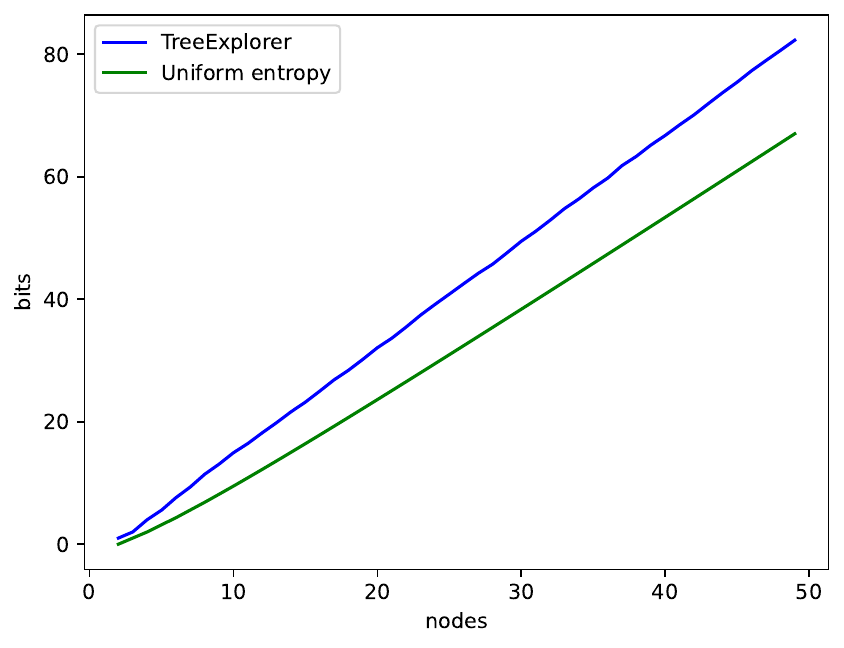}
         \caption{Average length}
         \label{length}
     \end{subfigure}
     \hfill
     \begin{subfigure}[b]{\columnwidth}
         \centering
         \includegraphics[width=0.7\columnwidth]{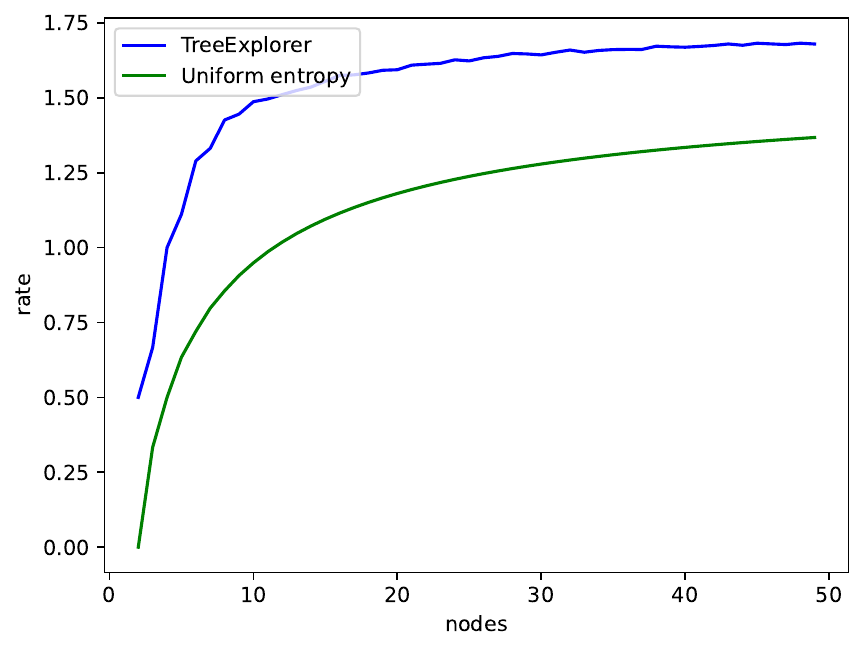}
         \caption{Average rate of change}
         \label{rate}
     \end{subfigure}
     \hfill
     \caption{Comparing the performance of TreeExplorer on a uniform unlabeled unordered rooted tree source with its entropy. Subfigure \ref{length} shows the average bit length, and subfigure \ref{rate} shows its average rate of change.}
    \label{entropy_uni}
\end{figure}

\subsubsection{Comparison with Adjacency list}

The adjacency list representation is one of the most widely used methods for storing trees. This method uses $2n\lceil \log_2 n\rceil$ bits to represent a tree. Fig. \ref{adj list} compares the performance of adjacency list with TreeExplorer. Notice that for coding a labeled tree using TreeExplorer, an additional $n\lceil \log_2 n\rceil$ bits are needed to list the node labels in the order of their appearance.

\begin{figure}
    \centering
    \includegraphics[width = 0.7\columnwidth]{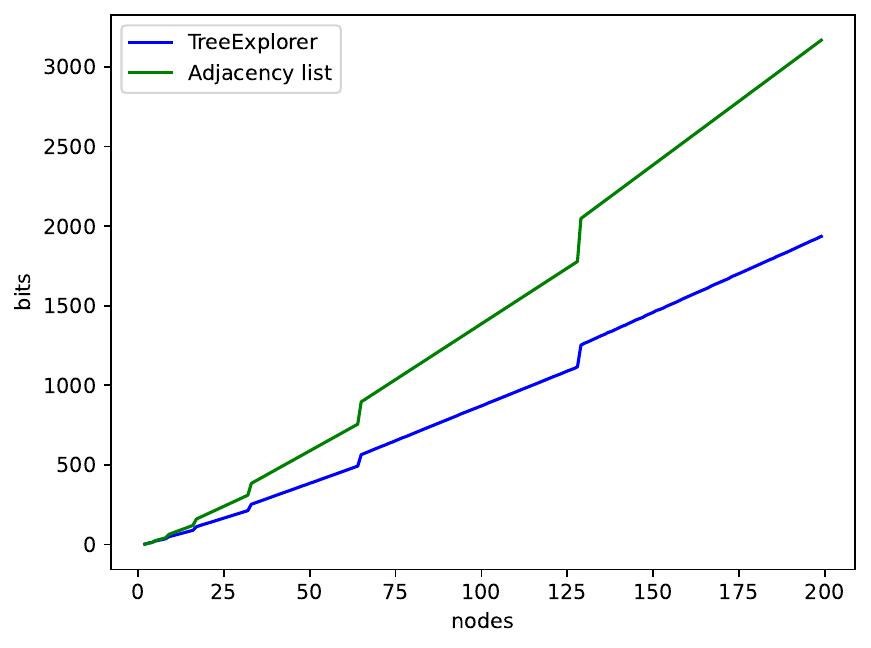}
    \caption{Comparing the performance of TreeExplorer with adjacency list}
    \label{adj list}
\end{figure}

\subsubsection{The Newick format}

The Newick format has been the standard for representing phylogenetic trees since its introduction back in 1986 \cite{cardona2008extended}. In this method, trees are represented using parentheses and commas. This format starts from the root of the tree, and lists the children and subtrees of the root in a nested manner. We will not go into further details on how this method works. However, because of the similarities between this format and TreeExplorer, we compare the performance of these two methods.

Our calculations show an additional $n+1$ bits and $3n-4l+1$ bits in the Newick format compared to when TreeExplorer uses TD and PC, respectively. Fig. \ref{Newick} compares the average codeword lengths of TreeExplorer and the Newick format for rooted trees of up to 50 nodes. It can be seen that the figure also confirms that TreeExplorer provides us with a shorter codeword length.

\begin{figure}[h!]
    \centering
    \includegraphics[width = 0.7\columnwidth]{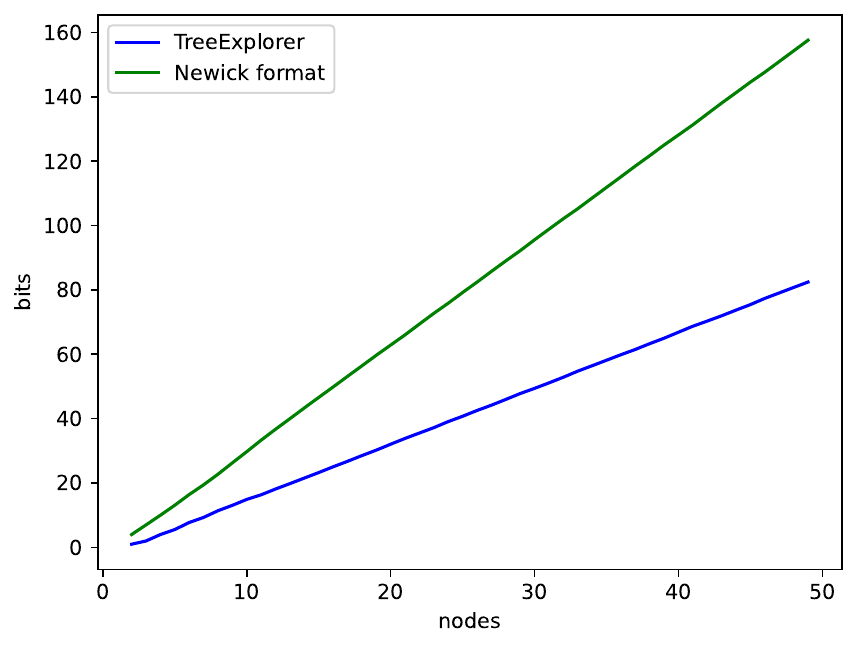}
    \caption{Comparing the performance of TreeExplorer with the Newick format}
    \label{Newick}
\end{figure}

\subsection{Universal compression of SGTs}

In accordance with what we discussed in the previous section, we will be looking for a way to decompose SGTs into a sequence of random variables that can be seen as a stationary and ergodic stochastic process. This will then allow us to simply run a universal compression algorithm on this sequence, and study its performance. We propose the following definition for an SGT sequence, which provides method to create a sequence from a given SGT.

\begin{definition}[SGT sequence]
For a given SGT, its unique SGT sequence is defined as follows. We start from the root, and add the number of its children as the first random variable in the sequence. We then move on to its leftmost child, and add its number of children to the sequence. We continue traversing the tree in a breadth-first manner \cite[Ch.~22.2]{cormen2022introduction}, and keep adding the number of children of each node that we see to the sequence. When we are finished exploring the tree and have added the number of children of all the nodes, the SGT sequence is done.
    
\end{definition}

Fig. \ref{SGT sequence} shows an SGT alongside its SGT sequence. It can easily be shown that an SGT sequence creates a one-to-one mapping between the trees and the sequences, and the tree is fully recoverable given the SGT sequence. We use the notation $f_{SGT}(t)$ to show the SGT sequence of a tree $t$. Additionally, note that the tree traversal can generally be done in any way desirable. In this definition, we have used a Breadth-First Search (BFS). However, any other traversal algorithm such as Depth-First Search \cite[Ch.~22.3]{cormen2022introduction} can also be used. This is of course as long as the same method is used when translating the sequence back into a tree.

\begin{figure}
    \centering
    \includegraphics[width = \columnwidth]{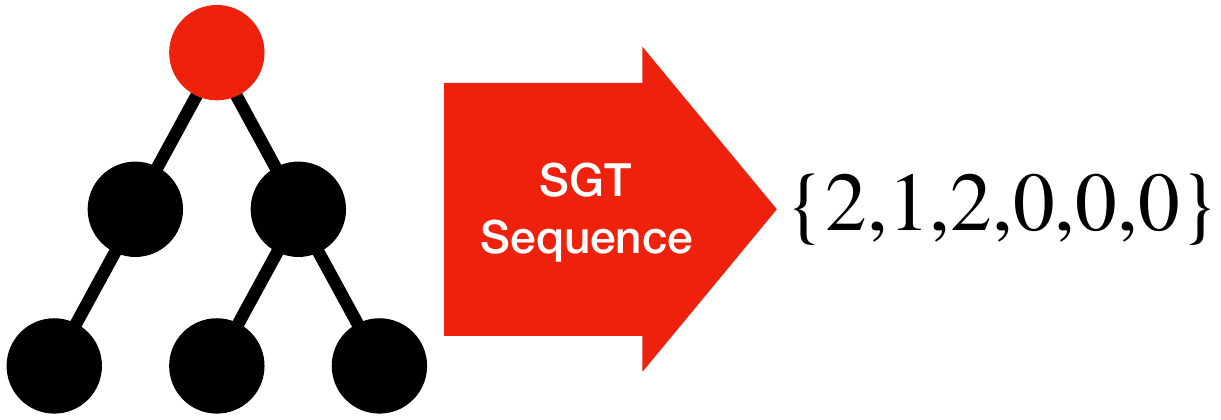}
    \caption{An example SGT and its corresponding SGT sequence}
    \label{SGT sequence}
\end{figure}

Based on the definition of SGTs, the number of children for each node is chosen in an i.i.d manner from the same children distribution. Therefore, the SGT sequence of a tree is a sequence of $n$ i.i.d random variables, with $n$ being the number of children of the tree. As i.i.d sequences are both stationary and ergodic, it can be concluded that LZ can be used to compress SGT sequences.

First, let us consider a sequence of independently generated SGTs $\{t_1,t_2,\ldots, t_n\}$. If we create a new sequence $\{f_{SGT}(t_1), f_{SGT}(t2), \ldots, f_{SGT}(t_n)\}$, it can be seen that the tree sequence can be fully recovered having the SGT sequence. This is because each SGT is uniquely coded using its respective SGT sequence, and the boundaries between consecutive SGTs is always known as the leafs are always marked by having zero children in an SGT. Therefore, the sequence $\{f_{SGT}(t_1), f_{SGT}(t2), \ldots, f_{SGT}(t_n)\}$ is a sequence of i.i.d random variables, and can be compressed optimally using LZ as $n\to \infty$. Therefore, the proposed compression algorithm is both universal and optimal for a sequence of SGTs from the same family.

After having considered sequences of trees, we now move on to study the performance of the proposed compression algorithm on single SGTs. For a single tree $t$, the length of $f_{SGT}(t)$ will also be limited. Therefore, there will be an inevitable redundancy in the LZ code of $f_{SGT}(t)$. We will try to quantify this redundancy in the remainder of this section. To this end, we will use the upper bounds on the redundancy of LZ algorithms, calculated by Savari \cite{savari1997redundancy}. We will focus on the LZW algorithm for now, but the analysis for other members of the LZ family is very similar. Let $R(t)$ show the redundancy per symbol when compressing $f_{SGT}(t)$ using LZW, $\mathscr{R}$ to show the average redundancy per symbol for all the possible trees using the SGT model at hand, and $I(t)$ to show the information in $t$ as per Shannon's definition of information \cite{shannon1948mathematical}. \cite{savariLZW} provides an upper bound on the redundancy of a finite sequence compressed using LZW, which we use to get the following bound on the redundancy for one single tree $t$ with $n$ nodes (and therefore $n$ symbols in its SGT sequence).

\begin{equation}
\label{upper bound single}
    R(t) \leq \frac{I(t)}{n\ln{n}}\log_2\left(\frac{C\log_2e}{H_C} \right)+\mathcal{O}(\frac{1}{\ln{n}})
\end{equation}

To find an upper bound on $\mathscr{R}$, we need to take an average of (\ref{upper bound single}) over all possible trees in the SGT model. To this end, we first note that for $t$ with $n$ nodes, we have $I(t)\leq\log_2n$. If we use $K$ to show $\frac{1}{\ln{2}}\log_2\left(\frac{C\log_2e}{H_C} \right)$, we can write the following upper bound on $R(t)$ based on (\ref{upper bound single}).

\begin{equation}
\label{upper bound single 2}
    R(t) \leq \frac{K}{n}+\mathcal{O}(\frac{1}{\ln{n}})
\end{equation}

We now need to take the average of (\ref{upper bound single 2}) over all possible number of nodes in the SGT model. For this purpose, we need to calculate the probability of a tree having exactly $n$ nodes. This is challenging, and might not be possible. Therefore, we again find an upper bound to it. If we use $p_0$ to show $P_C(c=0)$, and use $N$ as the random variable for the number of nodes in the trees generated by the SGT model, we can write the following equation.

\begin{equation}
\label{pn}
    p(N=n)\leq (1-p_0)^n
\end{equation}

The upper bound in (\ref{pn}) comes from the fact that in order to have $n$ nodes, we can have $n$ nodes to have at least one child. Additionally, we know that for $n=1$, the probability is exactly $p_0$, and $\log_2C$ bits are used to code it in LZW. We also use the lower bound $1-1/x$ to $\ln{x}$ in order to make $\mathcal{O}(1/\ln{x})$ calculable. Having this in mind and combining (\ref{upper bound single 2}) and (\ref{pn}) gives us the following upper bound on $\mathscr{R}$.

\begin{subequations}
\begin{align}
    \mathscr{R}&\leq p_0\log_2C+\sum_{i=2}^{\infty}p(N=i)\left(\frac{K}{i}+\mathcal{O}(\frac{1}{\ln{i}})\right)\\
    &\leq p_0\log_2C+\sum_{i=2}^{\infty}(1-p_0)^i\left(\frac{K}{i}+\frac{1}{1-1/i}\right)\\
    \label{redundancy SGT}
    & = K(p_0-\log_2p_0-1)-\log_2p_0+\frac{1}{p_0}\\
    \nonumber
    &+p_0(1+\log_2p_0+\log_2C)-2
\end{align}
    
\end{subequations}

(\ref{redundancy SGT}) provides us with an upper bound on the average redundancy of using the proposed compression algorithm on a single SGT. It can be seen that this redundancy depends on a number of factors, such as $p_0$, $H_C$, and possible number of children in the SGT model. As this method only compresses single trees, we can not expect the redundancy to be zero, as zero redundancy in universal compression algorithms only happens asymptotically as the number of samples tends to infinity. However, the found upper bound gives an acceptable redundancy for many SGT models. Specifically, it can be seen that the redundancy matches the conditions of optimality defined by (\ref{optimality condition SGT}).

It was seen that the way the number of the children of the nodes is listed in this compression algorithm is based on a DFS traversal on the tree nodes. However, notice that as long as it is known to both the encoder and the decoder how the tree was traversed, it does not matter which tree traversal method is used. This is because the number of children of nodes are i.i.d. random variables, and therefore their order does not affect the optimality of the LZ algorithm. For instance, Breadth-First Search, Pit-Climbing, Tunnel-Digging, or any other tree traversal algorithm that explores all the nodes of the tree could have also been used and we would still get the same results.

\subsection{Universal compression of the Erdős–Rényi Spanning Tree Model}

In this section, we will examine a similar approach to the previous section in order to compress trees generated using the Erdős–Rényi (ER) Spanning Tree Model. Our first aim is to therefore see if we can represent each tree using a sequence of random variables that satisfy stationarity and ergodicity, while using the properties of the way these trees have been generated.

Based on their definition, ER spanning trees are labeled. Additionally, the number of nodes in the graph and the tree must be predetermined ($n$). We propose the following approach for coding ER spanning trees. We divide our coding process into two steps: extracting certain bits from the adjacency matrix of the tree, and then compressing the extracted sequence of bits. We first start by describing the bit extraction process.

\textbf{Bit extraction:} We start by looking at the adjacency matrix of the tree, starting with the row corresponding to the connections of node 1. This row consists of $n-1$ bits $(a_{1,2}, a_{1,3} \ldots, a_{1,n})$, where each bit represents the existence of a connection between node 1 and the other nodes in the tree. We take all these bits during the extraction process. After this, we know all the connections of node 1 in the tree. Let us show the number of these connections with random variables $C_1$. Each pair among these $C_1$ edges removes the possibility of having one other edge in the tree. For instance, if node 1 is connected to both nodes $i$ and $j$, there can not be a connection between $i$ and $j$ in the tree. Therefore, having the connections of node 1 removes the need for including $C_1 \choose 2$ bits in the adjacency matrix of the tree, and we will know exactly which ones. After having described the connections of node 1, we go to the nodes to which node 1 is connected in the order of their labels. We continue by writing down the rows of the adjacency matrix corresponding to these rows, without the connections that have been covered before or whose state is known due to the described edge elimination process. This way, the second node requires less than impossibility of this node being connected to other neighbours of node 1. We can carry on this way and explain the remaining connections of each node, until all the nodes in the tree are covered. Note that bit extraction can be applied to any simple graph, and not just ER graphs. We show the process of bit extraction from a tree $t$ with $f(t)$.

After having extracted certain bits of the adjacency matrix using $f$, we simply feed them into a universal compression algorithm, such as the Lempel-Ziv-Welch algorithm \cite{welch1984technique}. Fig. \ref{ERcompression} summarizes our proposed compression technique.

\begin{figure}
    \centering
    \vspace{0.2in}
    \includegraphics[width = 0.8\columnwidth]{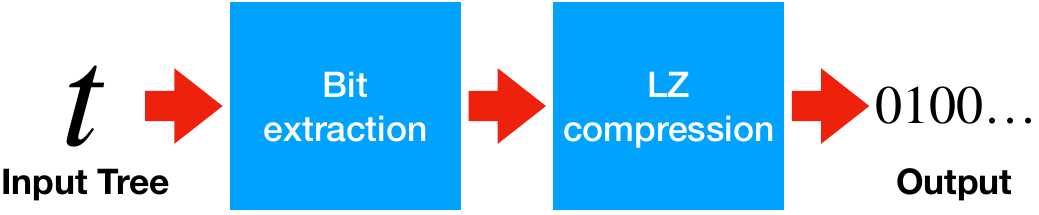}
    \caption{Proposed algorithm for compressing ER Spanning Trees}
    \label{ERcompression}
\end{figure}

We will now show that the redundancy of the proposed compression algorithm tends to zero as the trees grow large. We will perform the calculations for the case where LZ78 is used as the universal compression. However, the result is similar for other compression algorithms in the LZ family. To this end, we state and prove the following theorem.

\begin{theorem}
\label{EROptimal}
The redundancy of the proposed compression algorithm for trees created using the ER spanning tree model tends to zero as the number of nodes of the tree $n$ grow large. The algorithm is also universal in the sense that it does not depend on the value of the ER parameter $p$.
\end{theorem}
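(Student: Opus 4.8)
The plan is to mirror the structure of the SGT argument from the previous subsection, exploiting the fact that the bit-extraction map $f$ is injective and produces a bit string whose length is exactly $\binom{n}{2}$ minus the number of edges eliminated by the triangle-closure rule, i.e. a string of i.i.d. $\mathrm{Bernoulli}(p)$ bits once we condition on the tree structure. First I would establish two quantities: (i) the exact length $\ell(t)$ of $f(t)$, arguing combinatorially that extracting node $i$'s remaining row eliminates $\binom{C_i}{2}$ further entries where $C_i$ is the number of already-known neighbours of node $i$, so that $\ell(t)=\binom{n}{2}-(\text{number of ``closed'' pairs})$, and crucially that $\ell(t)$ grows like $\Theta(n^2)$ for every fixed $p$; and (ii) the entropy $H_T$ of the ER spanning tree model, which by \eqref{Ht}, \eqref{H(T|G)}, \eqref{H(G|T)} (or directly) is $\Theta(n\log n)$ in the relevant regime but in any case is dominated by $\binom{n}{2}H(p)$ only up to lower-order terms.

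Next I would invoke the redundancy bound for LZ78 on a finite i.i.d.\ (hence stationary ergodic) source: for a length-$m$ sequence over a finite alphabet the expected codeword length exceeds $m$ times the per-symbol entropy by $O(m/\log m)$, or more precisely by $O(m\log\log m/\log m)$ depending on the reference; since the extracted bits are i.i.d.\ $\mathrm{Bernoulli}(p)$ conditioned on the set of ``known'' positions, the per-symbol entropy is exactly $H(p)$, and the total expected compressed length is $\ell(t)H(p)+O(\ell(t)/\log \ell(t))$. Averaging over trees $t$ and comparing with $H_T$: the leading term $\mathbb{E}[\ell(t)]H(p)$ plus the entropy of describing which positions were eliminated (i.e.\ the tree shape itself) should reconstruct $H_T$ exactly — this is the analogue of the lossless bookkeeping in the SGT case — so the redundancy per the whole tree is the $O(\ell(t)/\log \ell(t)) = O(n^2/\log n)$ slack term. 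Finally I would normalise: the redundancy \emph{rate}, i.e.\ redundancy divided by $\ell(t)\sim\Theta(n^2)$, is $O(1/\log n)\to 0$ as $n\to\infty$, and since none of these bounds reference $p$ except through the constant $H(p)$, universality over $p$ is immediate.

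The main obstacle I anticipate is making the ``bookkeeping is lossless'' step rigorous — that is, showing that feeding $f(t)$ to LZ and measuring its length against $H_T$ really does isolate the redundancy as the pure LZ finite-length slack, with no hidden loss from the fact that the decoder must know \emph{which} adjacency entries were skipped. The subtlety is that the positions skipped during bit extraction depend on the tree being reconstructed, so decoding is sequential and self-referential; one has to argue that at each step the decoder, having read enough bits to reconstruct node $i$'s neighbourhood, knows deterministically which entries to skip next, and that the bits actually present are still conditionally i.i.d.\ $\mathrm{Bernoulli}(p)$ given the past — which follows from edge-independence in the ER model but deserves a careful statement. A secondary technical point is choosing the correct form of Savari's / the standard LZ78 finite-length redundancy bound and verifying the alphabet here is binary so the relevant constant is benign; and one must confirm $\ell(t)$ is bounded below by something growing (it is, since at most $n-1$ closures can occur per the tree's edge count, so $\ell(t)\ge\binom{n}{2}-(n-1)$), so that division by $\ell(t)$ genuinely drives the rate to zero.
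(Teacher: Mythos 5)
Your outline shares the paper's skeleton (bit extraction, then LZ78, then a finite-length redundancy bound, then averaging over trees), but the two quantitative claims it rests on are both problematic, and the step you yourself flag as the ``main obstacle'' is precisely the step the paper never attempts. First, the extracted bits are entries of the adjacency matrix of the \emph{tree}, not of the underlying ER graph: the whole string contains at most $n-1$ ones, the bits are strongly dependent (they must describe an acyclic spanning edge set), and their per-symbol entropy is nowhere near $H(p)$ --- the parameter $p$ enters only through the distribution over which spanning tree is produced. This kills your expression $\ell(t)H(p)+\mathcal{O}(\ell(t)/\log\ell(t))$ for the expected code length and the subsequent attempt to reconstruct $H_T$ from it. Second, your lower bound $\ell(t)\geq\binom{n}{2}-(n-1)$ is false: the elimination rule discards every pair of nodes whose common neighbour has already been fully described, a $\sum_v\binom{C_v}{2}$-type quantity that can be quadratic --- for a star centred at node 1 the extracted string has length exactly $n-1$, since all $\binom{n-1}{2}$ remaining pairs are eliminated after the first row. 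You have conflated the number of tree edges with the number of eliminated non-edges, so the normalisation by $\ell(t)\sim n^2$ that drives your rate to zero is not justified pointwise.

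The paper's proof deliberately avoids comparing the code length to $H_T$. It invokes the Plotnik--Weinberger--Ziv bound $\mathscr{R}\leq\ln\ln l/\ln l+\mathcal{O}(1/\ln l)$ on the per-symbol redundancy of LZ78 for a binary string of length $l$ --- a bound that depends only on the length, not on any i.i.d.\ or Bernoulli structure --- then pushes the expectation over trees inside via Jensen's inequality applied to the concave map $l\mapsto\ln\ln l/\ln l$, and finally computes $\mathbb{E}[L(t)]=h(n)$ from a recursion on the expected number of undetermined adjacency entries per BFS step. The only property of $h(n)$ actually used is that it diverges with $n$, which makes the conclusion insensitive to the exact order of $\mathbb{E}[L(t)]$ and to the value of $p$; universality falls out for that reason, not because the symbol entropy equals $H(p)$. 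If you want to salvage your stronger claim (expected total code length approaching $H_T$), you would need a correct joint model of the extracted bits and an exact accounting of the information carried by the skipped positions --- neither of which your outline supplies, and which the paper's weaker per-symbol formulation is designed to sidestep.
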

\begin{proof}
It is shown in \cite{plotnik1992upper} that for a binary sequence of length $l$, the redundancy of LZ78, which we show with $\mathscr{R}$, satisfies the following inequality.

\begin{equation}
\label{lz78}
    \mathscr{R}\leq \frac{\ln\ln l}{\ln l} + \mathcal{O}\left(\frac{1}{\ln l}\right)
\end{equation}

Therefore, if we use $L(t)$ to show the length of $f(t)$, we can use the following inequality to find the average redundancy of the proposed compression algorithm.

\begin{equation}
\label{lz78_average}
    \mathbb{E}\left[\mathscr{R}\right]\leq \mathbb{E}\left[\frac{\ln\ln L(t)}{\ln L(t)}\right] + \mathbb{E}\left[\mathcal{O}\left(\frac{1}{\ln L(t)}\right)\right]
\end{equation}

As it can be easily shown that $\ln \ln x/\ln x$ is a concave function, we can use Jensen's inequality \cite{jensen1906fonctions} for the first term in (\ref{lz78_average}) and write

\begin{equation}
\label{lz78_Jensen}
    \mathbb{E}\left[\frac{\ln\ln L(t)}{\ln L(t)}\right] \leq \frac{\ln\ln \mathbb{E}[L(t)]}{\ln \mathbb{E}[L(t)]}.
\end{equation}

Based on (\ref{lz78_Jensen}), we need to calculate $\mathbb{E}[L(t)]$. Note that the bit extraction process induces an order on the nodes of tree based on the traversal order. Looking closely, this is simply a Breadth-First Search traversal of the tree, by choosing node 1 as the root. Let us show the number of connections left to be described for the $i$th node in this sequence using $A_i$. Firstly, we know that $A_1 = n-1$. Going to each new node, the need for describing the connections to all the nodes that came before it and all their neighbours is removed. As in each step, we do not have any prior information about the connections to the remaining nodes, each bit can be considered an independent Bernoulli process just like in the original graph. Therefore, for $i>1$ the expected value of unknown connections is $n-1$, minus the $i-1$ node that have come before and their expected number of edges, which is simply $p$ times their number of expected connections. However, we must take into account that this way we are double counting all the prior nodes except for node 1, and this needs a correction term. Based on these reasons, we can write the following recursive equations for the expected values of $A_i$s.

\begin{equation}
\label{sys1}
    \begin{cases}
        \mathbb{E}[A_1] = n-1\\
        \mathbb{E}[A_i] = n-i-p\sum_{j=1}^{i-1}\mathbb{E}[A_j]+(i-2)p,& i>1
    \end{cases}
\end{equation}

Eq. (\ref{sys1}) will result in the following recursive equation for $\mathbb{E}[A_i]$ for $i>1$.

\begin{equation}
\label{sys2}
    \begin{cases}
        \mathbb{E}[A_1] = n-1\\
        \mathbb{E}[A_i] = (1-p)\mathbb{E}[A_{i-1}]-1+p,& i>1
    \end{cases}
\end{equation}

Solving (\ref{sys2}) gives us the following solution.

\begin{equation}
\label{recursive solution}
    \mathbb{E}\left[A_i\right] = \frac{(p-1)^2-\left((n-2)p+1\right)(1-p)^i}{p(p-1)}
\end{equation}
Eq. (\ref{recursive solution}) gives us the following result for the expected number of bits to code.

\begin{dmath}
    \sum_{i=1}^{n}\mathbb{E}[A_i] = \frac{\left(n p^2 - (1 - p)^n - p \left(n (1 - p)^n - 2 (1 - p)^n + 2\right) + 1\right)}{p^2}
    \label{num var}
\end{dmath}

If we use $h(n)$ to show the term calculated in (\ref{num var}), we will only need to code a maximum of $h(n)$ bits from the adjacency matrix of the tree on average. We can write the following equation by inserting (\ref{num var}) into (\ref{lz78_Jensen}).

\begin{equation}
\label{redundancy_final}
    \mathbb{E}[\frac{\ln\ln L(t)}{\ln L(t)}] \leq \frac{\ln\ln h(n)}{\ln h(n)}
\end{equation}

For the $ \mathbb{E}\left[\mathcal{O}\left(\frac{1}{\ln L(t)}\right)\right]$ term in (\ref{lz78_average}), we can simply replace it with a coefficient of $\mathbb{E}[\frac{\ln\ln L(t)}{\ln L(t)}]$ and the inequality will still hold. Therefore, we will have the following upper bound on the average redundancy of the compression algorithm.

\begin{equation}
\label{redundnacy upper bound}
    \mathbb{E}[\mathscr{R}]\leq K\frac{\ln\ln h(n)}{\ln h(n)},
\end{equation}
where $K$ is a constant. It can easily be seen that
\begin{equation}
    \lim_{n\to \infty} h(n) = n.
\end{equation}
Therefore, we can write
\begin{equation}
\label{limitzero}
    \lim_{n\to \infty}\mathbb{E}[\mathscr{R}] = \lim_{n\to \infty} K\frac{\ln\ln h(n)}{\ln h(n)} = 0.
\end{equation}
Eq. \ref{limitzero} shows that the redundancy tends towards zero as $n$ grows large. It can also be seen that it tends towards zero regardless of the value of $p$ and the way that the random spanning trees were chosen. Therefore, it can be said that the proposed method is universal in the sense that it does not depend on the ER parameter or the random tree selection process. 
\end{proof}

An interesting observation is that the proposed compression algorithm can be further generalized by generalizing the bit extraction process. For instance, the process does not necessarily need to start from node 1 as mentioned in the bit extraction procedure. Additionally, it can be seen that other traversal methods such as a DFS traversal would have worked too. The reason for that is the fact that as long as we know the ordering on how the nodes have been traversed, we can reconstruct the respective parts of the adjacency matrix and recover the graph. This gives us the freedom to choose a traversal method that suits the needs of our application better. This can help us in designing a query-preserving tree coding algorithm. In other words, the tree traversal algorithm which determines the ordering of the nodes can be chosen in a way that is able to provide answers to queries of a particular scenario. If we look closely at the proof Theorem \ref{EROptimal}, it can be seen that the only importance of the bit extraction process is that it induces on order on the nodes, and then explores the rows of the adjacency matrix according to that order. Therefore, any traversal on the tree that induces an ordering on all the nodes of the tree can be used in the bit extraction process and we would still get the same results on the optimality of the algorithm. This includes both the traversal algorithm itself, and the choice for the node to start with. Some examples of other algorithms that can be used are any variations of the DFS (Preorder, Inorder, and Postorder) \cite[Ch.~6]{cormen2022introduction}, and the Best-First Search algorithm \cite[p.~48]{pearl1984heuristics}.

\section{Conclusion}

In this paper, we took a novel approach towards the entropy and compression of tree data structures. We started by looking at different random tree sources and analysing their complexity in terms of Shannon entropy. Uniform tree sources and Simply Generated Trees were studies first as existing models of random tree generation. We then moved on to introduce a new random tree generation algorithm that we call the spanning tree model. It was discussed that this model can simulate many of the scenarios that happen in practice. After the entropy of the general model was formulated, we introduced a subcategory of this model whose underlying network is generated using the ER model. This made us able to quantify the entropy of the the source in terms of the model parameters. Ultimately, having the entropy of each of the studied models, we moved on to the compression domain. Universal compression algorithms were introduced for all of the studied models, and it was proven that the redundancy of these algorithms tends to zero as these trees grow large. Future directions of research can include considering other random graph generators for the spanning tree model, and finding a more general tree compression algorithm that goes beyond the ER model.

\section*{Acknowledgment}

This work was supported by EPSRC grant number EP/T02612X/1. For the purpose of open access, the authors has applied a creative commons attribution (CC BY) licence (where permitted by UKRI, ‘open government licence’ or ‘creative commons attribution no-derivatives (CC BY-ND) licence’ may be stated instead) to any author accepted manuscript version arising. We also thank Moogsoft inc. for their support in this research project.

\begin{IEEEbiographynophoto}{Amirmohammad Farzaneh}
(Student Member, IEEE) received his B.Sc. degree at the top of his class in electrical engineering from Sharif University of Technology, Tehran, Iran, in 2020. Since then, he has been a Ph.D. student at the Department of Engineering Science, University of Oxford, Oxford, UK. His research interests include communication systems, information theory, graph theory, and interdisciplinary fields such as computational biology.
\end{IEEEbiographynophoto}

\begin{IEEEbiographynophoto}{Mihai-Alin Badiu}
(Member, IEEE) received the Dipl.-Ing., M.S., and Ph.D. degrees in electrical engineering from the Technical University of Cluj- Napoca, Romania, in 2008, 2010, and 2012, respectively. From 2012 to 2019, he was a Post-Doctoral Researcher and an Assistant Professor with the Department of Electronic Systems, Aalborg University, Denmark. From 2016 to 2018, he held an Individual Post-Doctoral Fellowship from the Danish Council for Independent Research. He is currently a Senior Research Associate with the Department of Engineering Science and a Lecturer in Electrical Engineering at Balliol College, University of Oxford, U.K. His research interests include wireless networks theory, signal processing for communications, and probabilistic modeling and inference.
\end{IEEEbiographynophoto}

\begin{IEEEbiographynophoto}{Justin P. Coon}
(Senior Member, IEEE) received the B.Sc. degree (with distinction) in electrical engineering from Calhoun Honours College, Clemson University, Clemson, SC, USA, in 2000 and the Ph.D. degree in communica- tions from the University of Bristol, Bristol, U.K., in 2005.
He held various research positions with Toshiba Research Europe Ltd. (TREL) from 2004 to 2013, including the position of a Research Manager from 2010 to 2013, during which, he led all research on physical layer communications and network science with TREL. He was a Visiting Fellow with the School of Mathematics, University of Bristol, from 2010 to 2012, and was a Reader with the Department of Electrical and Electronic Engineering from 2012 to 2013. In 2013, he joined the University of Oxford, Oxford, U.K., where he is currently a Professor of Engineering Science and the Emmott Fellow in Engineering Science with Oriel College, Oxford, U.K. He has authored or coauthored more than 200 papers in IEEE and APS journals and conferences and is a named Inventor on more than 30 patents. He was the recipient of the Toshiba’s Distinguished Research Award for his work on 4G systems and three best paper awards. He was the Editor of several IEEE journals and has Chaired or Co-chaired various conferences. He is a Fellow of the Institute for Mathematics and its Applications.
\end{IEEEbiographynophoto}

\end{document}